\newcolumntype{H}{>{\setbox0=\hbox\bgroup}c<{\egroup}@{}}
\newcommand{\REM}[1]{}
\newtheorem{observation}[theorem]{\bf Observation}
\newcommand {\weight} {{\bf {w}}}
\newcommand {\vone}{\vspace{.1in}}
\newcommand {\Otilde}{\tilde{O}}
\newcommand {\underG}{U_G}
\newcommand {\undirD}{D_u}
\newcommand {\bc} {\mbox{BC}}
\newcommand {\congest} {CONGEST }
\newcommand {\gammain} {\Gamma_{\textrm{in}}}
\newcommand {\gammaout} {\Gamma_{\textrm{out}}}
\newcommand {\lvl} {\ell}
\newcommand {\status} {status}
\newcommand {\ready} {ready}
\newcommand {\sent} {sent}
\mathchardef\mh="2D
\begin{document}
\title {Distributed Algorithms for Directed Betweenness Centrality and All Pairs Shortest Paths\thanks{This work was supported in part by NSF Grant CCF-1320675.}}
	\author {Matteo Pontecorvi\inst{1} and Vijaya Ramachandran\inst{2}}
\institute{Nokia Bell Labs, Paris-Saclay, France \\ \url{matteo.pontecorvi@nokia.com} \and University of Texas at Austin, USA \\ \url{vlr@cs.utexas.edu}
}

\maketitle

\begin{abstract}
	The betweenness centrality (BC) of a node in a network (or graph) is a measure of its importance in 
	the network. BC is 
	widely used
	in a large number of environments such as social networks, transport networks, security/mobile networks and more.
	We present an $O(n)$-round distributed algorithm  for computing BC of every vertex
	as well as all pairs shortest paths (APSP)  in a directed unweighted network, where $n$ is the number of vertices and $m$ is the number of edges.
	We also present $O(n)$-round distributed algorithms for
	computing APSP and BC in a weighted directed acyclic graph (dag).
	Our algorithms are in the  \congest model and our weighted dag algorithms appear to be the
	first nontrivial distributed algorithms for both APSP and BC. All our algorithms pay
	careful attention to the constant factors in the number of rounds and number of messages
	sent, and for unweighted graphs they improve on one or both of these measures by
	at least a constant factor  over previous results for both directed and undirected
	APSP and BC.
\end{abstract}

\section{Introduction}\label{sec:intro}

There has been considerable research on designing distributed algorithms on networks for various properties of the
graph represented by the network~\cite{Lynch,Garay,Pelegb,elkin,Peleg,Lenzenstoc,Nanongkai1}. The 
goal in these algorithms is to minimize the number of rounds used by the distributed algorithm.

In this paper we consider distributed algorithms for computing betweenness centrality (BC), a widely used measure of the importance of a node in
a network
(see definition below). 
We focus on directed graphs, and we use the \congest model (reviewed
in Section~\ref{sec:model}). 
We present 
a $2n +O(D)$-round algorithm for computing BC in unweighted directed graphs, where $D$ is the (finite) directed diameter.
The algorithm sends no more than $2mn+2m$ messages.
If $D$ is
infinite (i.e., if the graph is not strongly connected), the algorithm runs in $4n$ rounds.
Also, within our BC algorithm for unweighted directed graphs is an APSP
algorithm that runs in $n+O(D)$ rounds when $D$ is finite, sending at most $mn+2m$ messages.
These algorithms  work for undirected graphs with the same bounds.

Our distributed BC algorithm for unweighted directed graphs has been implemented and evaluated on
the distributed platform D-Galois~\cite{gluon}, and has been found to outperform earlier high-performance
distributed BC implementations~\cite{HP19}.

For weighted directed acyclic graphs (weighted dags) we present 
an $n +O(L)$-round algorithm for computing
APSP and a $2n +O(L)$-round algorithm for BC, where $L$ is the length of a longest path from
a source vertex in the dag to any other vertex. 

\vone
\noindent
{\bf Betweenness Centrality.} Let $G=(V,E)$ be a directed graph with  $|V|=n$, $|E|=m$, and with a positive edge weight $\weight (e)$ on
each edge $e\in E$.
Let 
$\sigma_{xy}$ denote the number of shortest paths (SPs)
from $x$ to $y$ in $G$, and
$\sigma_{xy}(v)$ the number of SPs from $x$ to $y$ in $G$ that pass through $v$, for each pair $x,y\in V$. 
Then, $BC(v) = \sum_{s \neq v, t \neq v} \frac{\sigma_{st}(v)}{\sigma_{st}}$.

The measure $BC(v)$  is often used as a parameter that  determines the relative importance of
$v$ in $G$ relative to the presence of $v$ on shortest paths, and is computed for all $v\in V$.
Some applications of
BC include analyzing social interaction networks \cite{KA12},
identifying lethality in biological networks \cite{PCW05},
identifying key actors in terrorist networks \cite{Coffman,Krebs02},  and
identifying and preventing security attacks on mobile networks \cite{Quercia}.
BC is also used for identifying community structure in social and biological networks using the Girvan–Newman algorithm \cite{Girvan},
and for understanding  road network patterns of traffic analysis zones \cite{traffic}.
Many of the above systems are usually represented as directed networks (see Section 7 in \cite{survey}), and this motivates our interest in studying distributed solutions for computing BC in directed graphs.
The widely used sequential algorithm for BC is the one by Brandes~\cite{Brandes01} but no nontrivial distributed
algorithm was known for directed graphs prior to our results.
\subsection{\congest Model} \label{sec:model}

We start with some definitions.
Let $G=(V,E)$ be a directed unweighted graph. 
For a node $u \in V$ we define $\gammain(u) = \{v \in V \, | \, (v,u) \in E\}$ as the set of \emph{incoming neighbors} of $u$ and
 $\gammaout(u) = \{v \in V \, | \, (u,v) \in E\}$ as the set of the \emph{outgoing neighbors} of $u$. 
Let $\underG$  be the undirected version of $G$.
A digraph $G$ is \emph{weakly connected} if $\underG$ is connected.
A digraph $G$ is \emph{strongly connected} if it contains at least one directed path $u \leadsto v$ and at least one directed path $v \leadsto u$ for each pair of nodes 	$u,v \in V$. Similarly, we 
can define a \emph{weakly connected component} (wcc) and \emph{strongly connected component} (scc) in  a digraph.
For a path $\pi_{xy}$ from $x$ to $y$, the \emph{distance} $d(x,y)$ is the sum of all edge weights in the path, while the \emph{length} $\ell(\pi_{xy})$ is the number of edges in $\pi_{xy}$. For a dag $G$, we call $L$ the length of a longest (in terms of number of edges) path in $G$. We indicate the shortest path distance from $x$ to $y$ as $\delta(x,y)$, with $\delta(x,y) = \infty$ if there is no path. We use $D$ to denote the diameter of the directed graph $G$, while if the graph is undirected we use $\undirD$.

In the \congest model
a network of processors is generally modeled by an undirected graph $G=(V,E)$, with $|V|=n$ nodes and $|E|=m$ edges.
If $G$ is weighted then each edge has a positive integer weight, 
which is often restricted to a
$poly(n)$ value.
 Each node $v \in V$ has a unique ID in $\{1,\ldots, \textrm{poly}(n)\}$ and infinite computational power. 
 If the graph $G=(V,E)$ is directed then it is assumed that
 the communication channels (edges in $G$) are bidirectional,
i.e., the communication network is represented by the undirected graph $\underG$.

In the \congest model the size of a message is bounded: each node can send along each edge at most 
$O(B)$ bits in a given round. Usually, $B=\log n$ but sometimes $B = \log n+ \log R$, where 
$R$ is an upper bound on the largest value that can naturally occur within the computation.
Given the limit on the amount of data that can be transferred in a message, this model considers \emph{congestion issues}, which occur when a long queue 
of messages (each of size at most $B$) is scheduled to be sent by the same node over the same edge. 
The performance of an algorithm is measured by the number of \emph{rounds} it needs. In a single round a node 
 $v \in V$ can receive a message of size  $O(B)$ along each incoming edge $(u,v)$.
 Node $v$ processes
 its received messages (instantaneously, given its infinite computational power) 
 and then sends a (possibly different) $O(B)$ bit message  along each of
 its incident edges, or remains silent. The goal is to design distributed algorithms for the graph $G$ using a small number of rounds, and
the round complexity in this model has been studied extensively~\cite{Peleg}.
For convenience, we assume that the vertices are numbered from 1 to $n$ and we denote the vertex $i$ by $v_i$.

\section{Our Results}\label{sec:results}
The following two theorems state our main results 
(see also Table~1).
Here $D$ is the directed diameter in a directed graph and $D_u$ is the
undirected diameter in an undirected graph.

\begin{itemize}
\item \underline{Unweighted Directed Graphs}:

\begin{theorem} \label{thm:main}
	On an unweighted graph $G$ with $n$ nodes and $m$ edges,\\
	\textbf{(I)} Algorithm~\ref{alg:dAPSP}  computes directed APSP with the following bounds in the  \congest model:
	\begin{enumerate}
		\item If $n$ is known, in $\min\{n+O(D),2n\}$ rounds while sending $mn+O(m)$ messages in any graph.
		\item If $n$ is known, in  $2n$ rounds while sending at most $mn$ messages in any graph (by omitting Steps \ref{alg:dAbfs} and \ref{alg:dAn6}).
		\item If $n$ is not known, in $n+O(D)$ rounds while sending at most $mn+O(m)$ messages if $G$ is strongly connected.
	\end{enumerate}
	\noindent
	\textbf{(II)}
	Algorithm \ref{alg:accum} computes BC values of all vertices with at most twice the number of rounds and messages as in part (I) for each of the three cases.\\  
	\textbf{(III)}  If $G$ is undirected the  bounds for rounds and messages in parts (I) and (II) hold with $D$ replaced by $D_u$.  
\end{theorem}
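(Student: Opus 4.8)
The plan is to treat Part (I) as the core and to derive Parts (II) and (III) from it. Algorithm~\ref{alg:dAPSP} is, I expect, an all-sources pipelined BFS: each vertex $v_i$ injects a token carrying its own identity, a node forwards a given source's token along each outgoing edge only the first time it hears of that source, and it records the round of first arrival as the distance. First I would prove correctness by the standard BFS shortest-path property: since tokens travel one hop per synchronous round and a node adopts a source only on first arrival, the round at which $v_i$'s token first reaches $u$ (offset by $v_i$'s injection time) equals $\delta(v_i,u)$. Packaging this as an invariant maintained across rounds handles all $n$ sources simultaneously and hence gives directed APSP; an unreachable pair is exactly one whose token never arrives, which the termination rule detects.

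The main obstacle is the round and message analysis, i.e.\ controlling congestion. The key observation is that each directed edge $(u,v)$ transmits a given source's token at most once over the entire execution, so the traffic on any edge is at most $n$ tokens and the pipeline contributes at most $mn$ messages. For rounds I would prove a pipelining lemma: when a node forwards at most one pending token per outgoing edge per round, breaking ties by a fixed priority (e.g.\ source ID), the per-edge load of at most $n$ tokens together with a propagation depth of at most $D$ yields a makespan of $n+O(D)$; a token is delayed only by higher-priority tokens sharing its edges, and these delays pipeline rather than compound into $nD$. Since $D\le n-1$ this is also at most $2n$, giving the $\min\{n+O(D),2n\}$ bound of case~1, with the single auxiliary BFS of that case adding the $O(m)$ messages.

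Case~2 then follows by discarding the preliminary BFS in Steps~\ref{alg:dAbfs} and~\ref{alg:dAn6}: those steps are precisely what buy the $O(D)$ early-termination savings and the additive $O(m)$ messages, so without them one runs the pipeline for a fixed $2n$ rounds, which always suffices because every reachable token arrives within $n+D\le 2n$ rounds, and the clean per-edge-per-source accounting gives exactly $mn$ messages. Case~3 (strongly connected, $n$ unknown) needs a termination rule that does not presuppose $n$: here strong connectivity guarantees that $D$ is finite and that an echo/convergecast can certify global completion in $O(D)$ additional rounds, recovering $n+O(D)$.

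For Part (II) I would follow the distributed analogue of Brandes' algorithm: a forward pass that augments the APSP wave so that, for each source $s$ and node $u$, it also accumulates the shortest-path count $\sigma_{su}$ by summing predecessors' counts (these fit in $O(B)=O(\log n+\log R)$ bits), and a backward pass that accumulates dependencies in reverse BFS order. Each pass is a pipelined wave of the same shape as Part (I), so each costs $n+O(D)$ rounds and $mn+O(m)$ messages, and summing the two passes gives at most twice the bounds of (I) in each of the three cases, as claimed by Algorithm~\ref{alg:accum}. Finally, Part (III) is immediate: the algorithm never exploits edge orientation beyond treating each edge as a forwarding channel, which the \congest model already makes bidirectional, so executing it on an undirected graph is the same computation with propagation depth now governed by $\undirD$, and every bound in (I) and (II) carries over with $D$ replaced by $\undirD$.
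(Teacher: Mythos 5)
There is a genuine gap, and it is in the core of Part (I): your algorithm infers $\delta(v_i,u)$ from the \emph{round of first arrival} of $v_i$'s token and forwards a source's token only the first time it is heard. Under the very congestion delays that your priority scheduling introduces, first arrival is neither on schedule nor along a shortest path. Concretely, suppose $v$ has in-neighbors $u$ and $u'$ with $\delta(s,u)=5$ and $\delta(s,u')=3$ (so $\delta(s,v)=4$ via $u'$): if $u'$'s send for source $s$ is delayed more than $u$'s by higher-priority traffic, then $v$ hears of $s$ first through $u$, records a distance of $6$ (or a timing-corrupted value), forwards it, and by your ``only on first hearing'' rule never corrects it. This is exactly why the paper's Algorithm~\ref{alg:dAPSP} does \emph{not} use arrival times as distances: messages carry explicit distance values, each node keeps them in a lexicographically sorted list $L_v$, the entry at position $k$ with distance $d$ is sent precisely in round $d+k$ (Steps~\ref{alg:send1s}--\ref{alg:send1e}), and a later-arriving smaller distance \emph{replaces} the stored entry (Step~\ref{alg:dAn45}). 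The whole analysis rests on Lemma~\ref{lemma:insert-at-v} (any insertion at position $k$ in round $r$ satisfies $d_{sv}+k>r$), which guarantees every entry attains its final, correct value before its scheduled send round; Lemma~\ref{lemma:true-sp} then gives the non-decreasing send order and hence the one-message-per-source-per-node bound you want. Your ``pipelining lemma'' (delays pipeline rather than compound) is asserted without proof, and it is precisely the hard content that the position-based schedule makes provable; no such lemma can rescue the first-arrival-equals-distance claim, because the claim is simply false for this forwarding discipline.

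Part (II) has a related gap: the paper's accumulation (Algorithm~\ref{alg:accum}) is not a generic ``backward pass in reverse BFS order.'' It schedules node $v$'s accumulation send for source $s$ at round $A_{sv}=R-\tau_{sv}$, where $\tau_{sv}$ is the exact round in which $v$ sent its forward APSP message for $s$. Reversing these (distinct, per-source) forward timestamps is what simultaneously guarantees congestion-freeness and that every successor's dependency value arrives before $v$'s scheduled send (Lemma~\ref{lem:accum}); this yields the factor-two bound on rounds and messages. Your sketch never specifies \emph{when} the backward messages are sent, so neither property is established. Finally, for case I.3 you appeal to an ``echo/convergecast'' but give no local completion test a node can apply without knowing $n$; the paper first computes $n$ in $2D_u\le 2D$ rounds (Step~\ref{alg:compn}) and then uses the criterion $|L_v^r|=n$ together with $r\ge\max_s(d_{sv}+\ell_v^{(r)}(d_{sv},s))$ inside the convergecast of Algorithm~\ref{alg:check}.
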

 
Parts I.1 and I.3 of Theorem \ref{thm:main} improve over the $2n$-round algorithm in~\cite{Lenzen13} 
while sending a smaller number of messages. 
The number of messages sent is also improved for undirected graphs when compared to~\cite{Lenzen13}, where up to $2 mn$ messages or $mn + O(m \cdot D_u)$ messages can be sent.
Moreover, part I.3 of Theorem \ref{thm:main} computes APSP without knowing $n$ when $D$ is bounded: this case is not considered in~\cite{Lenzen13} 
where knowledge of $n$ is needed for directed APSP.
For message count, Part I.2 of Theorem \ref{thm:main} 
further reduces the number of messages to at most one message sent by each node for each source.

\vspace{.1in}

\item \underline{Weighted Directed Acyclic Graphs}:

\begin{theorem}
Let $L$ be the number of edges in a longest
path in a directed acyclic graph (dag).
Given  a weighted dag on $n$ vertices,  
 
 \begin{enumerate}
 \item If $n$ is known, Algorithm \ref{alg:congest-DAPSP} computes APSP in $n+O(L)$ rounds in the \congest model. It sends at most $mn+m$ message.
 
 \item If $n$ is known, using Algorithms \ref{alg:congest-DAPSP} and \ref{alg:accum}, the BC values of all nodes can be
 computed in $2n+O(L)$ rounds in the \congest model while sending at most $2mn+m$ messages.
 
 \item If $n$ is not known, Algorithm \ref{alg:congest-DAPSP} computes APSP in $O(n)$ rounds in the \congest model. It sends at most $O(mn)$ message.
 
 \item If $n$ is not known, using Algorithms \ref{alg:congest-DAPSP} and \ref{alg:accum}, the BC values of all nodes can be
 computed in $O(n)$ rounds in the \congest model while sending at most $O(mn)$ messages.
 
 \end{enumerate}
\end{theorem}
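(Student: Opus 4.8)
The plan is to prove correctness of the APSP values first and then read off the round and message bounds from the schedule. Fix a source $s$. Because $G$ is a dag, I would process vertices in a topological order and rely on the two standard dag recurrences, valid for every $v$ reachable from $s$:
$$\delta(s,v) = \min_{u \in \gammain(v)} \bigl(\delta(s,u) + \weight(u,v)\bigr), \qquad \sigma_{sv} = \sum_{u \in \gammain(v):\, \delta(s,u)+\weight(u,v)=\delta(s,v)} \sigma_{su},$$
with base case $\delta(s,s)=0$ and $\sigma_{ss}=1$. Assigning each vertex its topological level $\lvl(v)$ (the number of edges on a longest path ending at $v$, so $\lvl(v)\le L$), I would prove by induction on $\lvl(v)$ that Algorithm~\ref{alg:congest-DAPSP} finalizes $(\delta(s,v),\sigma_{sv})$ only after every $u\in\gammain(v)$ has finalized its own pair for $s$, and that the update it then applies is exactly the recurrence above. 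Since every $u\in\gammain(v)$ has $\lvl(u)<\lvl(v)$, the induction closes and all distances and shortest-path counts are correct. Computing the levels themselves is a preliminary topological pass costing $O(L)$ rounds.

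\textbf{Rounds and messages.} The structural fact driving the bounds is that, for a single source, a vertex finalizes one round after its last in-neighbor, so one source's information reaches every descendant within $O(L)$ rounds along the longest path. To serve all $n$ sources within $n+O(L)$ rounds I would pipeline the $n$ source-computations, injecting them in increasing vertex-ID order so that consecutive source-waves stay a bounded number of rounds apart as they descend the topological levels. The main obstacle is the congestion analysis: I must show the schedule never sends more than one message per edge per round. The plan is to argue inductively on level that each in-neighbor feeds $v$ its finalized tokens in increasing source order, one per round; $v$ then merges these streams, combining the tokens for a given source $s$ once the slowest relevant in-neighbor has delivered $s$, and emits on each out-edge at the same one-per-round rate with only $O(1)$ additional delay. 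Summed over the $L$ levels this contributes the $O(L)$ pipeline fill/drain cost, while the edge incident to a deep vertex must carry up to $n$ tokens (one per ancestor source) and is therefore the bottleneck that forces the $n$ term; the last injected source drains by round $n+O(L)$. The same accounting gives the message bound: exactly one finalized token per edge per source yields $mn$ tokens, and a careful count of the setup and termination signals adds at most $m$ more, for a total of $mn+m$.

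\textbf{Betweenness centrality.} For Part~2 I would feed the shortest-path dag produced by Algorithm~\ref{alg:congest-DAPSP}, namely the tight edges together with the counts $\sigma_{sv}$, into the Brandes-style backward accumulation of Algorithm~\ref{alg:accum}, which computes the source dependencies $\delta_s(\cdot)$ in reverse topological order and sums them into $BC(v)=\sum_{s}\delta_s(v)$. Since the shortest-path dag is a subgraph of $G$, its longest path still has at most $L$ edges, so the backward accumulation pipelines over the $n$ sinks in another $n+O(L)$ rounds and sends at most another $mn$ tokens; composed with the forward pass this gives $2n+O(L)$ rounds and at most $2mn+m$ messages, as claimed.

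\textbf{Unknown $n$.} For Parts~3 and~4 the schedule can no longer use $n$ to time the injection of sources or to detect global termination, so I would drive the pipeline purely by local quiescence, letting a vertex finalize and forward a source-token as soon as it has received that source's token along every in-edge, and terminate by propagating a completion signal up the topological order. Because there are $n$ sources, the longest path has $L\le n-1$ edges, and each edge still carries at most $O(1)$ messages per source, the round count remains $O(n)$ and the message count $O(mn)$ for APSP, and likewise for BC once the backward accumulation is appended. The only delicacy here is verifying that the quiescence-driven variant preserves the one-token-per-edge-per-round invariant, which follows from the same level-by-level argument used in the known-$n$ case.
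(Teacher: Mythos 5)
Your correctness induction and your use of longest-path levels match the paper, but your round/message analysis is missing the paper's central device, and without it the argument does not close. The paper does not run an adaptive pipeline in which nodes ``merge streams'' with $O(1)$ delay per level; it fixes \emph{in advance} the exact round at which every node acts for every source: after ensuring the IDs are $1,\ldots,n$, node $y$ processes and forwards the message of source $x$ at the predetermined absolute time $\tau_{xy} = ID_x + \lvl(y)$. Congestion then needs no induction at all: for sources $u \neq v$ we have $ID_u \neq ID_v$, so the send times $ID_u+\lvl(y)$ and $ID_v+\lvl(y)$ at node $y$ differ, giving one message per node per round immediately. Correctness is equally immediate: every edge $(a,b)$ has $\lvl(a)<\lvl(b)$, so every path from $x$ to $y$ has at most $\lvl(y)-\lvl(x)$ edges, hence every copy of source $x$'s message reaches $y$ by round $ID_x+\lvl(y)$; and since $ID_x+\lvl(y)\le n+L$, the $n+O(L)$ bound follows. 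Your sketch instead \emph{asserts} that waves injected one round apart stay a bounded number of rounds apart and that each node re-emits at one-per-round rate with $O(1)$ extra delay; making this rigorous essentially forces you to rediscover the fixed schedule, because in an adaptive scheme two sources can become ready at the same node in the same round and must be serialized, and you have given no argument that the resulting delays do not cascade.

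The sharper gap is the finalization rule itself, in both the known-$n$ and unknown-$n$ variants. Your rule --- finalize source $s$ at $v$ once its token has arrived ``along every in-edge'' (or from ``the slowest relevant in-neighbor'') --- deadlocks whenever some $u \in \gammain(v)$ is not reachable from $s$: that $u$ never emits a token for $s$, so $v$ waits forever, as do all of $v$'s descendants; and $v$ has no local way to tell ``token still in flight'' from ``token will never come,'' i.e.\ no local way to decide which in-neighbors are ``relevant.'' The paper's schedule sidesteps this because node $y$ acts for source $x$ at round $ID_x+\lvl(y)$ unconditionally, setting $\delta(x,y)=\infty$ if nothing arrived. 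For unknown $n$ the paper does not use quiescence at all: it runs an $O(n)$-round preprocessing in which all nodes broadcast their IDs and the minimum-ID node relabels them $1,\ldots,n$ and broadcasts back (which also reveals $n$), after which the same fixed schedule applies; this is what yields Parts 3 and 4. Your BC step is fine in spirit, but note the paper implements the reverse pass concretely by reversing the forward timestamps --- the accumulation at $y$ for source $x$ is scheduled at round $2n-\tau_{xy}$ --- which inherits both correctness (successors have strictly larger $\tau$, hence strictly earlier accumulation rounds) and the no-congestion property for free.
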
 
\end{itemize}

\begin{table}
	\begin{center}
		\begin{tabular}{|c|c||c|c|c||c|c|c|}
			\hline
			\multirow{3}{*}{\makecell{Graph  Type \\ \\}} & \multirow{3}{*}{\makecell{Problem \\ \\}} & \multicolumn{3}{c||}{Previous Results} & \multicolumn{3}{c|}{\textbf{Our Results}}   \\
			\cline{3-8}
			
			& & Rounds & Messages & Req. $n$ & \textbf{Rounds} & \textbf{Messages} & \textbf{Req. $n$} \\
			\hline \hline
			\multirow{3}{*}{\makecell{ \textit{Unweighted} \\ \textit{directed graphs}}} & \makecell[c]{  APSP} & \makecell[l]{$2n$~\cite{Lenzen13}} & \makecell[l]{ $\leq 2mn$} & \makecell[c] { yes } & \makecell[l]{  $\triangleright \min\{2n, n + 5D\}$ \\  $\triangleright \ 2n$   \\ } & \makecell[l]{ $\leq mn+4m$ \\  $\leq mn$} & \makecell[c] { no \\  yes} \\
			\cline{2-8}
			& \makecell[c]{ BC } & \makecell[l]{ $O(m)$ (trivial)  } & \makecell[l]{ $\leq m^2$ }  &
			\makecell[c] { no } & \makecell[l]{  $\triangleright \min\{4n, 2n + 7D\}$   \\  $\triangleright \ 4n$ } & \makecell[l]{ $\leq 2mn+4m$ \\  $\leq 2mn$} & \makecell[c] { no \\  yes} \\
			\hline
			\multirow{3}{*}{\makecell{ \textit{Unweighted} \\ \textit{undirected graphs}}} 
			& \makecell[c]{  APSP} &  \makecell[l]{$n+O(D_u)$ \cite{Lenzen13}} & \makecell[l]{  $ \leq mn+ O(mD_u)$}  & \makecell[c] { no } & \makecell[l]{  $\triangleright \min\{2n, n + 5\undirD\}$  \\   $\triangleright \ 2n$ } & \makecell[l]{ $\leq mn+4m$ \\  $\leq mn$} & \makecell[c] { no \\  yes} \\
			\cline{2-8}
			& \makecell[c]{ BC  } & \makecell[l]{ $O(n)$ ( $\geq 6n$) \cite{bc2016}  } & \makecell{ -- }  & \makecell[c] { no } & \makecell[l]{ $\triangleright \min\{4n, 2n + 7\undirD\}$   \\  $\triangleright \ 4n$ } & \makecell[l]{ $\leq 2mn+4m$ \\  $\leq 2mn$} & \makecell[c] { no \\  yes} \\
			\hline
			\multirow{3}{*}{\makecell{  \textit{Weighted dags} \\ }} & \makecell[c]{ APSP} & \makecell[l]{ $O(m)$ (trivial)  } & \makecell[l]{ $\leq m^2$} & \makecell[c] { no } & \makecell[l]{ $\triangleright \ n + 2L$  \\  $\triangleright \ O(n)$  } & \makecell[l]{ $\leq mn+m$ \\ $O(mn)$} & \makecell[c] { yes\\ no} \\
			\cline{2-8}
			& \makecell[c]{ BC  } & \makecell[l]{ $O(m)$ (trivial)  } & \makecell[l]{ $\leq m^2$ }  & \makecell[c] { no } & \makecell[l]{ $\triangleright \ 2n + 3L $ \\  $\triangleright \ O(n)$ } & \makecell[l]{ $\leq 2mn+m$ \\ $O(mn)$ } & \makecell[c] { yes \\ no} \\
			\hline
		\end{tabular}
		\vone
		\caption{A summary of our results  in the \congest model. 
			Here $D$ ($D_u$) is the directed (undirected) diameter of a directed (undirected) graph (if it is finite), and $L$ is the longest length of a  path in a dag. In our full graphs results, there are two bounds for each case. The first always refers to a weakly-connected directed graph without knowing $n$, the second to any directed graph knowing $n$. The columns `Req. $n$' indicate if the result requires the knowledge of $n$ a priori.
		}
	\end{center}
	\label{table1}
\end{table}

\noindent
Note that all our algorithms are in the broadcast \congest model and this further
allows  our results to map into the $k$-machine model \cite{kmachine}.

\paragraph{Undirected versus Directed APSP (and BC).}
As noted earlier, the APSP algorithm in~\cite{Lenzen13} 
is a correct $2n$-round algorithm for unweighted directed graphs 
even though it was presented as an undirected APSP algorithm. 
By using the height of a BFS-tree as a 2-approximation
of 
$D_u$, an alternate
$n+O(D_u)$-round bound is obtained in~\cite{Lenzen13} for APSP in an undirected connected graph.
However, this result does not hold for directed BFS and directed diameter. Instead, our Algorithm \ref{alg:check} uses a different method
to achieve an $n+O(D)$-round bound for directed strongly-connected graphs.
There are other $O(n)$-round undirected APSP algorithms~\cite{Holzer,Peleg2012} but these require bidirectional edges and do not work for directed graphs (for example, the use of distances along   a pebble traversal of a BFS tree in the proof of Lemma 1 in \cite{Holzer}). Similarly, the undirected BC algorithm in~\cite{bc2016} does not work for directed graphs
even if we substitute a directed APSP algorithm since their method for the accumulation phase is tied
to the undirected APSP method in~\cite{Holzer}.

In Section \ref{sec:bcdir}, we present 
the first nontrivial
distributed algorithm for BC in unweighted directed graphs. At the same time we also improve the round and/or message complexity (by a constant factor)
for APSP in both undirected and directed graphs and for BC in undirected graphs. 
Prior to our work, the best previous \congest algorithms for unweighted APSP were in~\cite{Lenzen13} and the only
nontrivial \congest algorithm for BC was the undirected unweighted BC algorithm in~\cite{bc2016}.

\subsection{New Techniques}

Our main contributions are in the introduction of new pipelining methods for orchestrating the message passing in the distributed network
leading to new or improved algorithms in several settings.
\begin{itemize}
	\item 
	For dags, a new pipelining technique where global delays are computed using distances computed in a longest length tree (LLT) rooted at a node (see Section \ref{sec:dag}).
	This technique could be applicable to other classes of graphs where an LLT can be computed efficiently. 
	We apply this technique to present the first $O(n)$ round APSP and BC algorithms 
	for weighted dags, with $n + O(L)$ rounds for APSP and $2n +O(L)$ rounds for BC in a dag.
	\item A simple {\it timestamp} pipelining technique based on reversing global delays 
	that occur
	during a forward execution of a distributed algorithm. This general method is 
	applicable
	when certain specific operations have to be
	back-propagated
	during a 
	reverse
	pass 
	of the algorithm.
	We use this technique in the Accumulation Phase for the BC scores following an APSP
	computation  (Section \ref{sec:acc}). 	
	\item Refining the pipelining technique in the Lenzen-Peleg algorithm~\cite{Lenzen13}
	 to obtain a faster 
	(by a constant factor) and simpler algorithm for computing  APSP in unweighted directed graphs (see Section \ref{sec:apsp}).
	This refinement reduces the number of rounds to $n +O(D)$ and also reduces the number of messages sent to at most $mn+2m$:  essentially just one message for each source is sent from each vertex along
	its outgoing edges.
		We can similarly improve the bounds for the {\it source detection task} studied in~\cite{Lenzen13};
	we do not discuss this further in this paper. 
	Building on our streamlined APSP algorithm, our BC algorithm runs in $2n + O(D)$ rounds and
	sends at most $2mn+2m$ messages overall.
	Our directed APSP algorithm also gives the best bound for number of rounds and
	messages for undirected APSP, and with our reversed pipelining method we also 
	improve on the bound in~\cite{bc2016}  for undirected BC
	 by a constant factor.
\end{itemize} 

\subsection{Related Work} \label{sec:rel}
Distributed algorithms for undirected graphs in the \congest model have received considerable 
attention~\cite{Peleg,Peleg2012,Holzer,Nanongkai1,Ghaffari,Lenzen13}.

For an unweighted undirected graph, a 
  $\tilde{O}(n)$-round
algorithms for approximate APSP can be found in \cite{Lenzenstoc} and \cite{Nanongkai1}. 
Moreover, a lower bound of $\Omega\left(\frac{n}{\log n}\right)$  for computing diameter was established in \cite{lowerdiam}, which implies a lower bound for solving APSP, and 
nearly optimal algorithms for this problem, running in $O(n)$ rounds,  were given in \cite{Holzer} and \cite{Peleg2012}. 
The constant factor in the number of rounds was improved to $n + O(D_u)$ in~\cite{Lenzen13}.
Recently, a result matching the $\Omega(n/ \log n)$ lower bound for APSP in unweighted undirected graphs was given in \cite{Hua}.
Additionally, for an unweighted undirected graph, $O(n)$ round APSP algorithms were given in~\cite{Holzer,Peleg2012}.
The constant factor in the number of rounds was improved to $n + O(D_u)$ in~\cite{Lenzen13}.
Lower bounds of $\Omega(n/ \log n)$ for computing diameter and APSP are given in~\cite{lowerdiam,Hua}.

For unweighted directed graphs, 
we became aware that the APSP algorithm claimed for undirected
graphs in~\cite{Lenzen13} in fact works for directed graphs.
We observe that the
bound on the number of rounds is $2n$, and the improved $n +O(D)$ bound obtained in~\cite{Lenzen13}
 for undirected graphs does not hold for directed graphs
(if all nodes need to know that the computation has terminated).
Other distributed algorithms for path problems in directed graphs can be found 
in~\cite{Nanongkai1,Ghaffari,Hillel}.
Very recently, a $\tilde{O}(n)$-round randomized APSP algorithm was announced in~\cite{BN18}

For a weighted directed (or undirected) graph, the exact APSP problem can be trivially solved in the \congest model in $O(m)$ rounds using an aggregation technique, where the entire network is aggregated at a single node.
Even for weighted dags, prior to our results no exact algorithms were known except for the trivial one. 
Randomized algorithms solved exactly the APSP problem in weighted graphs in $\Otilde(n^{5/3})$ rounds \cite{elkinstoc}, later improved to  $\Otilde(n^{5/4})$ with a Las Vegas algorithm in \cite{nanonew}.
A fully deterministic algorithm for solving APSP in weighted graphs in $\Otilde(n^{3/2})$ rounds has appeared in \cite{AR18}.
These deterministic results have been further improved for moderate edge-weights and distances
 in~\cite{AR18b}.

Distributed BC algorithms for BC from a a practical prospective are given
in \cite{wang} and \cite{tempo}. Recently, for unweighted undirected graphs 
an $O(n)$-round algorithm for computing BC in the \congest model was given in~\cite{bc2016},
where 
they also show an $\Omega(\frac{n}{\log n} + D_u)$-round  lower bound for computing BC and give a method to approximate an exponential number of shortest paths using log-size messages.
An approximation algorithm for
computing random walk BC in $O(n \log n)$ rounds in the \congest model was
recently given in \cite{Hua2}. Distributed BC algorithms from a practical prospective are given in 
\cite{wang,tempo}.
No $O(n)$-round BC algorithm was known for directed graphs prior to our algorithm.

\noindent
{\bf Organization of the Paper.} 
 In Section \ref{sec:bc} we review
Brandes' sequential algorithm for betweenness centrality~\cite{Brandes01}.
In the following two sections, we present our new results:
In Section~\ref{sec:bcdir} we describe our distributed BC algorithm for unweighted directed graphs as well
as our improvement to the number of rounds for unweighted directed APSP.  In
Section~\ref{sec:dag} we present our APSP and BC algorithms for weighted dags.

\section{Brandes'  Sequential Betweenness Centrality Algorithm} \label{sec:bc}

Brandes~\cite{Brandes01}  noted that if the single source shortest path (SSSP) dags are available for each node in $G$ it is possible to compute BC values using a recursive \emph{accumulation} technique.

\begin{equation}\label{eq0}
BC(v) = \sum_{s \neq v} \delta_{s \bullet} (v) \quad \textrm{ where } \quad \delta_{s \bullet} (v) = \sum_{t \in V \setminus {\{v, s\}}} \frac{\sigma_{sv}\cdot \sigma_{vt}}{\sigma_{st}}
\end{equation}
where $\sigma_{st}$ is the number of shortest paths from $s$ to $t$, and $P_s(w)$ are all the predecessors of $w$ in the SSSP dag rooted at $s$.
Moreover, $\delta_{s \bullet} (v)$ can be recursively computed as 
\begin{eqnarray}
	\label{eqn:rec-depend}
	\delta_{s \bullet} (v) = \sum_{w: v \in P_s(w)}
	\frac{\sigma_{sv}}{\sigma_{sw}} \cdot \left ( 1 + \delta_{s\bullet}(w)
	\right)
	\label{eq1}
\end{eqnarray}

\noindent
Brandes' 
sequential BC
algorithm is presented below and consists in the following steps:
for each source $s$ compute the SSSP dag $DAG(s)$ rooted at $s$ (Alg. \ref{algo:brandes}), for each $DAG(s)$ compute $\sigma_{sv}$ for each $v \in DAG(s)$ (Alg. \ref{algo:brandes}) and, for each $DAG(s)$ starting from the leaves, apply equation \ref{eq1} up to the root (Alg. \ref{algo:accumulate}).

\begin{algorithm}
	\begin{algorithmic}[1]
		\State {\bf for} every  $v \in V$ {\bf do} $\bc(v) \leftarrow 0$ \label{brandes-init}
		\For {every $s \in V$}
		\State run Dijkstra's SSSP from $s$ and compute $\sigma_{st}$ and $P_s(t), 
		\forall \ t \in V \setminus \{s\}$ \label{brandes-dijkstras}
		\State store the explored nodes in a stack $S$ in non-increasing distance from $s$
		\State accumulate dependency of $s$ on all $t \in V \setminus {s}$ using Algorithm~\ref{algo:accumulate} \label{brandes-accum}
		\EndFor
	\end{algorithmic}
	\caption{Betweenness-centrality($G=(V,E)$) (from \cite{Brandes01})}
	\label{algo:brandes}
\end{algorithm}

\begin{algorithm}
	\begin{algorithmic}[1]
		\Require
		$\forall t \in V$: $\sigma_{st}, P_s(t)$; a stack $S$ containing all $v\in V$ in non-increasing $d(s,v)$ value
		\State {\bf for} every $v \in V$ {\bf do}  $\delta_{s\bullet}(v) \leftarrow 0$ 
		\While {$S \neq \emptyset$}
		\State  $w \leftarrow$ pop$(S)$
		\State {\bf for} $v \in P_s(w)$ {\bf do} $\delta_{s \bullet}(v) \leftarrow \delta_{s \bullet}(v) + \frac{\sigma_{sv}}{\sigma_{sw}} \cdot \left( 1 + \delta_{s \bullet}(w) \right)$ 
		\State {\bf if} $w \neq s$ {\bf then} $\bc(w) \leftarrow \bc(w) + \delta_{s \bullet}(w)$ 
		\EndWhile
	\end{algorithmic}
	\caption{Accumulation-phase($s,S$) (from \cite{Brandes01})}
	\label{algo:accumulate}
\end{algorithm}

The structure of the above algorithm can be naturally adapted into a distributed algorithm
and was done so for undirected unweighted graphs in~ \cite{bc2016} (see Appendix).
In the next section we present an efficient distributed algorithm for BC in directed unweighted graphs
while also improving the round and/or message complexity (by a constant factor) for APSP in both undirected and directed graphs and for BC in undirected graphs.

The values of $\sigma_{uv}$ can be exponentially large in $n$. For computation
of exact BC values we will assume that the value of $B$ in the \congest model is 
sufficiently large to allow transmitting any $\sigma_{uv}$ value. Alternatively, we can
stay with $B=\log n$ and compute very good approximations to the BC values using a
technique in~\cite{bc2016} (see Appendix).

\section{APSP and BC in Unweighted Directed Graphs} \label{sec:bcdir}

In this section, we present our algorithm for computing betweenness centrality in unweighted
directed graphs in the \congest model. It is inspired by the Lenzen-Peleg distributed unweighted APSP 
algorithm~\cite{Lenzen13}, and contains new elements discussed in section~\ref{sec:apsp}.
Section~\ref{sec:acc} gives our simple distributed algorithm for the accumulation phase (Alg. \ref{algo:accumulate}) in
Brandes' algorithm, and our overall BC algorithm.

\subsection{The Lenzen-Peleg APSP Algorithm~\cite{Lenzen13}} \label{sec:LP13}

We start by reviewing 
some notation common to~\cite{Lenzen13} and 
our directed APSP algorithm (Alg.~\ref{alg:dAPSP}).
$L_v$ is an ordered list at node $v$ which stores pairs
$(d_{sv},s)$, where $s$ is a source and $d_{sv}$ is the shortest distance from $s$ to $v$.
These pairs are stored on $L_v$ in lexicographically sorted order, with $(d_{rv},r) < (d_{sv},s)$ if either
$d_{rv} < d_{sv}$, or $d_{rv}= d_{sv}$ and $r<s$.

In each round $r$ of the Lenzen-Peleg algorithm~\cite{Lenzen13}, every node $v$ sends 
along its outgoing edges
the pair with smallest index in $L_v^{r}$ which has its $\status$ (a conditional flag) still set to $\ready$,
and then sets 
the $\status$ of this pair
to $\sent$. As noted in~\cite{Lenzen13} this approach 
can result in multiple messages being sent from $v$ for the same source $s$. This
 is simplified in our algorithm, where only one  correct message is sent from each node $v$
 for each source, and this send is performed in a specific round without the need for 
 the additional $\status$ flag.

The Lenzen-Peleg algorithm~\cite{Lenzen13} completes in $n + O(\undirD)$ rounds and correctly computes
shortest path distances to $v$ from each vertex $s$ that has a path to $v$
(the undirected diameter, which we denote by $D_u$ here, is called $D$ in ~\cite{Lenzen13} 
because they only consider undirected graphs).
Although this is claimed
in~\cite{Lenzen13} only for undirected APSP, their techniques can be adjusted to work for directed APSP as well.
In particular, if the total number of vertices $n$ is known (or computed), the undirected APSP algorithm in~\cite{Lenzen13} can be modified to terminate in $2n$ rounds and compute APSP in a directed graph.

In Section~\ref{sec:apsp} we
present
a method to improve the number of rounds from $2n$ to $\min\{2n, n + O(D)\}$.
Our algorithm terminates in $n + 5D$ rounds on strongly connected graphs without knowing $n$; if $n$ is known, it terminates in $2n$ rounds in any directed graph.
Moreover, our algorithm reduces the total number of  messages sent to $mn+2m$ even for the undirected case.
Further, since we are interested in computing BC, our new algorithm also computes 
for each node $v$
the set $P_s(v)$ of predecessors of $v$ in the shortest path dag rooted at each source $s$, and the number of shortest paths $\sigma_{sv}$ from 
$s$ to $v$. 

In~\cite{Lenzen13},
since only APSP is of interest,
a node
forwards only the first shortest path message it receives from a predecessor in its
shortest path dag.
But here we need to monitor messages from all incoming edges to identify
all shortest path predecessors and to compute the number of shortest paths for each source. These enhancements
appear in our new Algorithm \ref{alg:dAPSP}, together with a call to our subroutine Algorithm~\ref{alg:check} to reduce the
number of rounds to $n + O(D)$ (when $D$ is finite); this is described in the next section.
We will use the output of Algorithm \ref{alg:dAPSP} to compute directed BC
in Algorithm \ref{alg:accum} in Section \ref{sec:acc}.

\subsection{APSP, Predecessors, and Number of Shortest Paths} \label{sec:pnsp} \label{sec:apsp}
In our directed APSP algorithm (Alg.~\ref{alg:dAPSP})
initially each node $v$ has just the pair $(0,v)$ in
$L_v$ (Step \ref{alg:init0}, Alg. \ref{alg:dAPSP}).
Let $L^{r}_v$ be the state of $L_v$ at the beginning of round $r$, and let $\ell^{(r)}_v(d_{sv},s)$ be the index of the pair $(d_{sv},s)$ in $L^{r}_v$.
If there there is an entry on $L_v$ with $d_{sv} + \ell_v^{r}(d_{sv},s)=r$ (and there can be at most one), then
this value is sent out  along with the associated $\sigma_{sv}$ value 
(Steps~\ref{alg:send1s}-\ref{alg:send1e}),
otherwise $v$ does not send out anything in round $r$. A received message for source $s$
 is either added to $L_v$ (or
updates an existing value for $s$ in $L_v$ if it improves the distance value for its source). If new shortest 
paths from $s$ to $v$ are added by this received message, the $\sigma_{sv}$ value  and $P_s(v)$ are updated to reflect this (Steps~\ref{alg:inLs}-\ref{alg:dAn5}). Steps~\ref{alg:dAbfs} and \ref{alg:dAn6} are used
to reduce the number of rounds from $2n$ to $n + O(D)$ and are discussed in Section~\ref{sec:irc}.

Algorithm \ref{alg:dAPSP} may need to send more than one value from a vertex $v$ in a round because of the parallel computation of Step \ref{alg:dAbfs},
but it never sends more than a constant
number of  values. In this case, $v$ will combine all these values into a single 
$O(B)$-bit
message.

\begin{algorithm}[ht]
	\caption{Directed-APSP$(G)$}
	\label{alg:dAPSP}
	\begin{algorithmic}[1]
		\State compute (in parallel with Step \ref{alg:1}) a BFS tree $B$ rooted at vertex $v_1$ (node with smallest ID); each vertex $u$ computes its set of children $C_u$ and its parent $p_u$ in $B$ \label{alg:dAbfs} \Comment{This will be used in Alg. \ref{alg:check}}
		\For {each vertex $v$ in $G$} \label{alg:0}
		\State $L_v \leftarrow ((0,v))$; set flag $f_v \leftarrow 0$ \Comment{Initialize} \label{alg:init0}
		\State {\bf for} each source $s$ in $G$ {\bf do}  {\bf if} $s=v$ {\bf then} $\sigma_{vv} \leftarrow 1$ {\bf else} $\sigma_{sv} \leftarrow 0$;  $P_s(v) \leftarrow \emptyset$ \label{alg:init1}
		\If {$n$ is not known} \Comment{Assumes $G$ is weakly-connected}
		\State {compute and broadcast $n$ to every node in at most $2\cdot \undirD$ rounds, where $\undirD$ is the diameter of $\underG$} \label{alg:compn}
		\EndIf
		\For {rounds $1\leq r \leq 2n$} \Comment{Step \ref{alg:dAn6} could cause termination before round $2n$ when $G$ is strongly connected} \label{alg:1}
		\If {$r = d_{sv} + \ell_v^{r}(d_{sv},s)$} \label{alg:send1s}
		\State $\tau_{sv} \leftarrow r$; send $(d_{sv}, s, \sigma_{sv})$ to all vertices in $\gammaout(v)$ \label{alg:dAn3}
		\Comment{Timestamp $\tau_{sv}$   will be used in Alg. \ref{alg:accum}} \label{alg:dAn7}
		\EndIf \label{alg:send1e}
		\State run APSP-Finalizer($v,p_v, C_v,n$) \Comment{See Alg. \ref{alg:check}} \label{alg:dAn6}
		\For {a received $(d_{su}, s, \sigma_{su})$ from an incoming neighbor $u$} \label{alg:inLs}
		\If {$  \nexists \, (d_{sv},s) \in L^{r}_v$}
		\State  vertex $v$ adds $(d_{sv},s)$ in $L_v$ with $d_{sv}=d_{su}+1$, sets $\sigma_{sv} \leftarrow \sigma_{su}$; $P_s(v) \leftarrow \{u\}$
		\ElsIf {$  \exists \, (d_{sv},s) \in L^{r}_v$ with $d_{sv} = d_{su} + 1$}
		\State vertex $v$ updates $\sigma_{sv} \leftarrow \sigma_{sv} + \sigma_{su}$; $P_s(v) \leftarrow P_s(v) \cup \{u\}$ \label{alg:dAn4} \label{alg:dAn2}
		\ElsIf {$  \exists \, (d_{sv},s) \in L^{r}_v$ with $d_{sv} > d_{su} + 1$}
		\State vertex $v$ replaces $(d_{sv},s)$ in $L_v$ with $(d_{su}+1,s)$;
		vertex $v$ sets $\sigma_{sv} \leftarrow \sigma_{su}$; $P_s(v) \leftarrow \{u\}$ 
		\label{alg:dAn45} \label{alg:dAn5}
		\EndIf \label{alg:inLe}
		\EndFor
		\EndFor \label{alg:2nstop}
		\EndFor
	\end{algorithmic}
\end{algorithm}

We now establish the correctness 
of Algorithm \ref{alg:dAPSP}.  We start by showing that every $d_{sv}$ value arrives at $v$ before the round
in which it will need to be sent by $v$ in Step 8.

\begin{lemma} \label{lemma:insert-at-v}
If an entry $(d_{sv},s)$ is inserted in $L_v$ at position $k$ in round $r$ then
$d_{sv} + k > r$.
\end{lemma}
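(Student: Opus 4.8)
The plan is to prove the claim by induction on the round number $r$, tracking how an entry's position in $L_v$ relates to the position, at the sending node, of the message that created it. Throughout I will use one structural observation: since $L_v$ is kept in lexicographic order, $d_{sv}$ is non-decreasing along the list, so the quantity $d_{sv}+\ell_v^{r}(d_{sv},s)$ is \emph{strictly} increasing with the index $\ell_v^{r}(d_{sv},s)$. In particular, at most one entry can satisfy the send condition $r=d_{sv}+\ell_v^{r}(d_{sv},s)$ in a given round, and any entry that lexicographically precedes a given one has a strictly smaller value of $d+\ell$.

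For the boundary case, before round $1$ the only entry is $(0,v)$ at position $1$, and $0+1>0$; I treat this as an insertion at ``round $0$''. For the inductive step, suppose an entry $(d_{sv},s)$ is inserted, or re-created by the distance-decreasing replacement, at position $k$ in round $r$ via Steps \ref{alg:inLs}--\ref{alg:inLe} of Algorithm~\ref{alg:dAPSP}, triggered by a message $(d_{su},s,\sigma_{su})$ that an incoming neighbour $u$ sent in round $r$. The send condition at $u$ gives $r=d_{su}+\ell_u^{r}(d_{su},s)$, and the update sets $d_{sv}=d_{su}+1$. Substituting, the goal $d_{sv}+k>r$ is equivalent to $k\ge \ell_u^{r}(d_{su},s)$: the position of the new entry in $L_v$ must be at least the position the sent entry occupied in $L_u$ at the moment it was sent. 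So the whole argument reduces to this single positional inequality (the same-distance case, Step \ref{alg:dAn2}, inserts nothing and is vacuous).

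To prove $k\ge \ell_u^{r}(d_{su},s)$ I would build an injection from the $\ell_u^{r}(d_{su},s)-1$ entries of $L_u^{r}$ that lexicographically precede $(d_{su},s)$ into the entries of $L_v$ that precede the newly inserted $(d_{sv},s)=(d_{su}+1,s)$. For a preceding source $i$, with $(d_{iu},i)<(d_{su},s)$ in $L_u^{r}$, the natural image is the entry for the same source $i$ in $L_v$. A short case analysis on the two ways $(d_{iu},i)$ can precede $(d_{su},s)$ (either $d_{iu}<d_{su}$, or $d_{iu}=d_{su}$ and $i<s$) shows that it suffices to know that $v$ currently holds an entry $(d_{iv},i)$ with $d_{iv}\le d_{iu}+1$: then $(d_{iv},i)<(d_{su}+1,s)$, the map is injective since distinct sources go to distinct entries, and counting predecessors yields $k\ge \ell_u^{r}(d_{su},s)$.

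The main obstacle is exactly this sub-claim, that when $u$ transmits source $s$ in round $r$, the node $v$ already knows every higher-priority source $i$ of $u$ to within distance $d_{iu}+1$. By the structural observation, each such $i$ has $d_{iu}+\ell_u^{r}(i)<r$, i.e.\ its priority at $u$ is below the current round, which strongly suggests $u$ has already forwarded $i$ to $v$ in an earlier round. The delicate point is that $u$ transmits each source only once, so I must also guarantee that the value $u$ forwarded earlier was already $d_{iu}$, its current and final value, rather than a larger stale estimate; otherwise $v$'s stored $d_{iv}$ could exceed $d_{iu}+1$. Establishing this finality is where a single induction on $r$ restricted to the lemma statement is not self-contained: I would strengthen the induction hypothesis to a simultaneous invariant asserting that sends occur in non-decreasing order of $d+\ell$ and that the value a node sends for a source equals that source's final shortest distance, i.e.\ the BFS-wavefront property underlying the Lenzen--Peleg schedule. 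With that invariant in hand the sub-claim, and hence the lemma, follow; isolating and proving that combined invariant is the crux of the argument.
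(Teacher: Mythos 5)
Your skeleton matches the paper's own proof of Lemma~\ref{lemma:insert-at-v}: both arguments use the send condition $r=d_{su}+\ell_u^{r}(d_{su},s)$ and $d_{sv}=d_{su}+1$ to reduce the claim to the positional inequality $k\geq \ell_u^{r}(d_{su},s)$, and both establish that inequality by matching each source preceding $(d_{su},s)$ in $L_u$ with an entry for the same source that already sits above the new entry in $L_v$ (the paper phrases the induction as a minimal counterexample, which is the same thing). So up to your third paragraph you are reconstructing the paper's argument.

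The genuine gap is in your final paragraph, and it is a misdiagnosis rather than a missing lemma. You claim the argument cannot be closed without a simultaneous invariant that sent values are final shortest distances (the wavefront property), because ``$u$ transmits each source only once'' and might therefore have forwarded only a stale, larger estimate for a preceding source $i$. But Algorithm~\ref{alg:dAPSP} has no sent-flag --- this is exactly where it departs from Lenzen--Peleg: Step~\ref{alg:send1s} fires whenever the current round equals $d_{sv}+\ell_v^{r}(d_{sv},s)$, so if the value for source $i$ held at $u$ improves after an earlier send, the improved entry is automatically rescheduled and re-sent; the ``at most one message per source'' property is a consequence derived later (via Lemma~\ref{lemma:true-sp}), not a constraint of the algorithm. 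Consequently the sub-claim you need is far weaker than finality, and it follows from your own induction hypothesis: let $d_{iu}$ be the value for source $i$ held by $u$ at round $r$, installed at position $p_0$ in some round $r_0<r$; the inductive hypothesis gives $d_{iu}+p_0>r_0$; the position of an unchanged entry never decreases (Lemma~\ref{lemma:inc-at-v}, whose proof is elementary and independent of the present lemma); and at round $r$ this entry sits above $(d_{su},s)$, so $d_{iu}+\mathrm{pos}(r)\leq d_{su}+\ell_u^{r}(d_{su},s)-1=r-1$. The integer quantity $d_{iu}+\mathrm{pos}(t)-t$ decreases by at most one per round, is positive at $r_0$ and negative at $r$, hence vanishes at some intermediate round, at which point $u$ sends exactly $(d_{iu},i,\cdot)$, so $v$ holds an entry for $i$ with value at most $d_{iu}+1$ before the insertion of $(d_{sv},s)$ is finalized. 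This closes your injection with no appeal to finality of sent distances; appealing to finality would in any case be uncomfortably close to circular, since that statement is Lemma~\ref{lemma:m1} of the paper and its proof relies on the present lemma, so you would indeed be forced into the full simultaneous induction that you only sketch and do not carry out.
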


\begin{proof}
In round $r=1$ any entry $(d_{sv},s)$  inserted in $L_v$ has $d_{sv}=1$ and the
minimum value for $k$ is 1. Hence $d_{sv} + k \geq 2>1$ so the lemma holds for round 1.

If the lemma does not hold, consider the first round $r$ in which an entry $(d_{sv},s)$ is 
inserted in $L_v$ at a position $k$ with $d_{sv} + k \leq r$. Let this $d_{sv}$ be inserted
due to a message ($d_{su},s, \sigma_{su})$ received by $v$ in round $r$ in
Step~\ref{alg:inLs}. Then, if $(d_{su},u)$ was in position $i$ in $L_u$ in round $r$,
$r=d_{su} + i$ and 
the entries in $L_u$ in positions 1 to $i-1$ must have been sent to $v$ in rounds
earlier than $r$. 
Each of these entries correspond to a different source, and a corresponding
entry for that source will be present at a position less than $k$ in $L_v$ (either because
a corresponding entry was inserted at $L_v$ when the message for it from $u$ was
received or an entry with an even smaller value for $d_{sv}$ was already present in $L_v$).
Hence $k \geq i$. But for the values in round $r$,
 $d_{sv} + k = d_{su} + 1 + k \geq d_{su} +i +1$ since $d_{sv}= d_{su} + 1$ and
 $k \geq i$ in round $r$. Since  $r=d_{su} + i$ we have 
$d_{sv} + k \geq r+1$.
This 
gives the desired contradiction and the lemma is established.
\end{proof}

Next we show that the position of an entry for a source $s$ in $L_v$ can never decrease unless its value is changed.

\begin{lemma} \label{lemma:inc-at-v}
If an entry $(d_{sv},s)$ in $L_v$ remains unchanged at $v$ between rounds $r$ and $r'$, with $r'>r$, then $\ell^{(r')}_v(d_{sv},s) \geq \ell^{(r)}_v(d_{sv},s)$. 
\end{lemma}

\begin{proof}
Once an entry is added to the list $L_v$ it can only be replaced by a lexicographic smaller one but it never disappears. Thus, every entry in $L_v$ that is below $(d_{sv},s)$ in round $r$
either remains in its position or moves to an even lower position in subsequent rounds.
Hence if $d_{sv}$ does not change between $r$ and $r'$, every entry below $(d_{sv},s)$ in round $r$ remains below it until round $r'$. It is possible that new entries could be added below
the position of $(d_{sv},s)$ in $L_v$ but this can only increase the position of $(d_{sv},s)$
in round $r'$.
\end{proof}

Lemmas~\ref{lemma:insert-at-v} and \ref{lemma:inc-at-v} establish that every entry that remains in
$L_v$ at the end of the algorithm was sent out at a prescribed round number (Step 6, Alg.~\ref{alg:dAPSP}) since
the entry was placed at its assigned spot before that round number is reached and after it was placed in
$L_v$ its position can only increase and hence it will be available to be sent out at the round  corresponding
to its new higher position.

\begin{lemma} \label{lemma:true-sp}
	At each vertex $v$, the distance values in the sequence of messages sent by $v$ are non-decreasing.
\end{lemma}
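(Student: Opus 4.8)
The plan is to prove Lemma~\ref{lemma:true-sp} by tracking, for each entry of $L_v$, the \emph{scheduled round} $d_{sv}+\ell^{(r)}_v(d_{sv},s)$ and showing that the successive messages that actually leave $v$ carry non-decreasing values of their offset $d_{sv}=r-\ell^{(r)}_v(d_{sv},s)$. First I would record a structural fact that pins down the send in each round: since $L^{(r)}_v$ is maintained in lexicographically sorted order, the distances along it are non-decreasing with position, so $d_i+i$ is \emph{strictly} increasing in the position $i$. Consequently the send condition $d_i+i=r$ of Step~\ref{alg:send1s} is met by at most one entry in any fixed round (as the algorithm already notes), and the distance sent in round $r$ is exactly $d(r)=r-p_r$, where $p_r$ is the position of that unique entry.

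The core of the argument is a claim comparing two consecutive send-rounds $r<r'$ (with no send strictly in between): I would show $p_{r'}-p_r\le r'-r$, since then $d(r')=r'-p_{r'}\ge r-p_r=d(r)$ and the full statement follows by chaining over all send-rounds. To prove this claim I would follow the entry $e$ emitted in round $r$ forward in time. By Lemma~\ref{lemma:inc-at-v}, as long as $e$ is unchanged its position only grows, so its scheduled value $d(r)+\ell_v(e)$ only grows from its round-$r$ value $r$. A short case analysis on the entry $e'$ sent in round $r'$ then suffices: either $e'=e$ (same distance), or $e'$ lies weakly after $e$ in sorted order (so its distance is $\ge d(r)$ by sortedness), or $e'$ lies strictly before $e$. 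In the first two cases monotonicity is immediate; note in particular that the entry immediately following $e$ has scheduled value $\ge r+1$, with equality forcing the \emph{same} distance $d(r)$, which is why the position of the emitted entry cannot advance faster than the round counter in the benign cases.

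The main obstacle is precisely the third case. An entry of distance strictly smaller than $d(r)$ could in principle be \emph{inserted before} $e$ during round $r$ — because $v$ only now receives a shorter route to some source — and then be emitted in round $r'$, which would advance $p_{r'}$ past $p_r$ by more than $r'-r$ and break monotonicity. Lemma~\ref{lemma:insert-at-v} already constrains such an insertion: an entry newly placed at position $k$ in round $r$ satisfies $d+k>r$, so it cannot be overdue; but by itself this does not forbid a strictly smaller distance from being inserted late and sent next.

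To close the dangerous case I expect to promote the statement to a synchronous induction over rounds carried out simultaneously at all vertices, with the send of round $r$ depending only on $L^{(r)}_v$ (hence on the state before round $r$), so the induction is not circular. The entries that create or improve $L^{(r+1)}_v$ are exactly the messages emitted by the in-neighbours of $v$ in round $r$, and by the inductive hypothesis each in-neighbour emits distances in non-decreasing order; feeding this into Lemma~\ref{lemma:insert-at-v} forces every insertion placed before $e$ in round $r$ to carry distance at least $d(r)$, eliminating the strictly-smaller late insertion and completing the inductive step. This coupling with the monotonicity already established at the in-neighbours is the step I expect to require the most care, since it is where the timing of short-distance information relative to longer-distance sends has to be controlled.
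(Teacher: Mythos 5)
Your scaffolding is sound and, up to the crux, follows the same route as the paper's own proof: the identity $d(r)=r-p_r$, the reduction of monotonicity to $p_{r'}-p_r\le r'-r$, and the isolation of the single dangerous case --- an entry lexicographically smaller than the emitted entry $e$ being inserted into $L_v$ (possibly as a replacement, Step~\ref{alg:dAn45} of Alg.~\ref{alg:dAPSP}) in some round $r''\ge r$ --- all match the paper's proof by contradiction. You are also right that Lemma~\ref{lemma:insert-at-v} \emph{alone} does not rule this case out. What you are missing is that Lemma~\ref{lemma:insert-at-v}, combined with sortedness of $L_v$ and a minimality choice, does rule it out, purely locally at $v$: let $(d_{s'v},s')$ be the \emph{first} entry with $d_{s'v}<d(r)$ inserted in a round $r''\ge r$. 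Every entry lexicographically smaller than $(d_{s'v},s')$ has distance at most $d_{s'v}$, hence strictly less than $d(r)$; at round $r$ at most $p_r-1$ entries of $L_v$ have distance less than $d(r)$ (they all precede $e$ in the sorted list), and by the choice of $(d_{s'v},s')$ no such entry is created between round $r$ and this insertion. Hence the insertion position satisfies $k'\le p_r$, giving $d_{s'v}+k'\le d(r)-1+p_r=r-1<r''$, which contradicts Lemma~\ref{lemma:insert-at-v}. This is exactly the paper's argument; no induction over rounds at other vertices is needed.

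The substitute you propose for this step does not work as described. The inductive hypothesis that every in-neighbour emits non-decreasing distances does not control the values of insertions at $v$: even with all in-neighbours monotone, the distances inserted into $L_v$ can strictly decrease over time (one in-neighbour can deliver a large distance for a far source early, and a different in-neighbour a smaller distance for a near source later; this is consistent with Lemma~\ref{lemma:insert-at-v} because the later, smaller entry lands at a large position, so its scheduled send round lies in the future and the sends from $v$ remain monotone). Consequently, feeding in-neighbour monotonicity into Lemma~\ref{lemma:insert-at-v} cannot by itself force insertions placed before $e$ to carry distance at least $d(r)$; what forces that (indeed, forbids such insertions entirely once $e$ has been sent) is the positional count above, which is independent of what the in-neighbours do. So your proof has a hole exactly at its crux, and the repair is not a stronger global induction but the local sortedness-plus-minimality count.
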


\begin{proof}
	Suppose $v$ sends a message with value $d_{sv}$ in round $r$ and then sends a message with a smaller $d$ value in a later round.
	Then this smaller $d$ value must be received by $v$ in round $r$ or later  since otherwise it would have been placed in $L_v$ (and thus sent) before $d_{sv}$.
	
	Let $k = \ell_v^r (d_{sv},s)$. Let $d_{s'v}$ be the first $d$ value smaller than $d_{sv}$ that is inserted in $L_v$ in a round $r' \geq r$.
	Then, $d_{s'v}$ is inserted in a position $k'\leq k$ since the $d$ values are in non-decreasing order on $L_v$. But then $d_{s'v} + k' < d_{sv} + k = r \leq r'$.
	But this contradicts Lemma~\ref{lemma:insert-at-v}.
\end{proof}

Lemma~\ref{lemma:true-sp} shows that the distance messages are sent out in non-decreasing order,
and hence at most one message is sent by each vertex for each source.
Finally, the next lemma shows that the shortest path counts $\sigma_{sv}$ and the
predecessor lists are also correctly computed.

\begin{lemma} \label{lemma:m1}
	During the execution of Algorithm~\ref{alg:dAPSP}, a vertex $v$ sends out the correct shortest path distance $d_{sv}=\delta(s,v)$ and path count $\sigma_{sv}$ for each source from which it is reachable.
	Also,  $P_s(v)$ contains exactly  the predecessors of $v$ in $s$'s SP dag when the message $(\delta(s,v),s,\sigma_{sv})$ is sent by $v$ in Step \ref{alg:dAn3}.
\end{lemma}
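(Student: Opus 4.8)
The plan is to prove the statement by induction over the pairs $(s,v)$ with $s\leadsto v$, ordered by the lexicographic key $(\delta(s,v),s)$, so that when I treat a given pair I may assume the lemma for every pair $(s',w)$ with $(\delta(s',w),s')<(\delta(s,v),s)$. Write $d=\delta(s,v)$, and let $\tau_{sv}$ be the unique round in which $v$ sends a message for $s$ (uniqueness is guaranteed by Lemma~\ref{lemma:true-sp}). The base case $d=0$ is immediate: then $v=s$, and the initialization in Step~\ref{alg:init1} already gives the broadcast values $\sigma_{vv}=1$ and $P_v(v)=\emptyset$.

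First I would record an auxiliary invariant, proved by a short induction on the round number: no vertex ever sends a distance value strictly smaller than its true distance, i.e.\ every message $(d_{sw},s,\cdot)$ has $d_{sw}\ge\delta(s,w)$. Indeed, a first violation would have to be triggered by an incoming message $(d_{su},s,\cdot)$ with $d_{su}=d_{sw}-1<\delta(s,w)-1\le\delta(s,u)$, which is a strictly earlier violation at $u$, a contradiction. Consequently the entry for $s$ in $L_v$ is always at least $d$, and it reaches exactly $d$ as soon as the first correct predecessor message arrives.

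The heart of the argument, and the step I expect to be the main obstacle, is the timing claim: every predecessor $u$ of $v$ in $s$'s shortest-path dag (that is, $u\in\gammain(v)$ with $\delta(s,u)=d-1$) sends its message $(d-1,s,\sigma_{su})$ in a round strictly before $\tau_{sv}$. To prove $\tau_{su}<\tau_{sv}$, let $i$ be the position of $(d-1,s)$ in $L_u$ when $u$ sends it, so $\tau_{su}=(d-1)+i$. The $i-1$ entries lying below $(d-1,s)$ in $L_u$ belong to distinct sources $s'$ with $(\delta(s',u),s')<(d-1,s)$; using $u\in\gammain(v)$ one checks $\delta(s',v)\le\delta(s',u)+1\le d$ and that the lexicographic inequality is preserved, so $(\delta(s',v),s')<(d,s)$. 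By the induction hypothesis each such $s'$ is a source from which $v$ is reachable and is therefore sent by $v$, and by Lemma~\ref{lemma:true-sp} together with the fact that $v$ sends at most one entry per round, it is sent in a round strictly before $\tau_{sv}$; hence $v$ already has a settled entry for every such $s'$, placed below the $s$-entry, in $L^{\tau_{sv}}_v$. Counting these $i-1$ entries together with the $s$-entry shows that the $s$-entry occupies position at least $i$ when $v$ sends it, whence $\tau_{sv}\ge d+i>(d-1)+i=\tau_{su}$. The same counting, applied with the earliest-sending predecessor, forces the value $v$ actually sends to be exactly $d$ and not an over-estimate: it shows the correct entry $(d,s)$ has settled in $L_v$ strictly before round $\tau_{sv}$. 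I expect the delicate point here to be avoiding the circularity between ``the value $v$ sends is correct'' and ``all predecessors have already reported''; the lexicographic induction together with the non-decreasing send order of Lemma~\ref{lemma:true-sp} is precisely what breaks it.

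Given the timing claim, correctness of the count and the predecessor set is routine. Since every predecessor message $(d-1,s,\sigma_{su})$ is received before round $\tau_{sv}$, the processing in Steps~\ref{alg:inLs}--\ref{alg:dAn5} inserts the first such message (setting $P_s(v)=\{u\}$, $\sigma_{sv}=\sigma_{su}$) and then, for every further predecessor, takes the branch $d_{sv}=d_{su}+1$ of Step~\ref{alg:dAn2}, accumulating $P_s(v)\cup\{u\}$ and $\sigma_{sv}+\sigma_{su}$; any in-neighbour that is not a predecessor carries distance at least $d$, yielding a candidate $d_{su}+1>d$ that matches none of the three branches and is ignored. Thus at the moment of sending, $P_s(v)=\{u\in\gammain(v):\delta(s,u)=d-1\}$ is exactly the predecessor set and $\sigma_{sv}=\sum_{u\in P_s(v)}\sigma_{su}$, which by the induction hypothesis equals the true number of shortest paths from $s$ to $v$. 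This completes the inductive step, and with it the lemma.
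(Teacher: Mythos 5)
Your overall strategy (strong induction on the lexicographic key $(\delta(s,v),s)$ together with the no-underestimate invariant) can be made to work, but as written there is a genuine gap, and it sits exactly where you placed the weight of the argument --- only the delicate point is not the circularity you flagged, it is \emph{ties}. In the timing claim you must show that each of the $i-1$ sources $s'$ with $(\delta(s',v),s')<(d,s)$ has its message sent by $v$ strictly before round $\tau_{sv}$, and you justify this by Lemma~\ref{lemma:true-sp} plus ``at most one entry per round.'' That justification is valid only when $\delta(s',v)<d$: Lemma~\ref{lemma:true-sp} orders messages carrying \emph{distinct} distance values, but says nothing about the relative order of two messages that both carry the value $d$. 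Sources $s'$ with $\delta(s',v)=d$ and $s'<s$ genuinely arise (an entry $(d-1,s')$ with $s'<s$ below $(d-1,s)$ in $L_u$ produces exactly this situation), and for them nothing you have proved rules out, a priori, that the entry $(d,s')$ is created at $v$ only after round $\tau_{sv}$ --- its triggering message may come from a different in-neighbor whose schedule you do not control --- in which case it contributes nothing to your count and the bound ``the $s$-entry occupies position at least $i$'' collapses. The missing fact is true, but it requires the list mechanics packaged in Lemma~\ref{lemma:insert-at-v}: combined with the sortedness of $L_v$ and Lemma~\ref{lemma:inc-at-v}, that lemma implies no entry can ever be inserted below an entry that has already been sent, so any tied, lexicographically smaller entry must already sit below the $s$-entry at round $\tau_{sv}$ and was itself sent earlier. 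You never invoke Lemma~\ref{lemma:insert-at-v} anywhere, and without it (or a re-derivation of this ``nothing lands below a sent entry'' fact) the tie case is unsupported.

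It is also worth seeing how much shorter the paper's route is, because it explains why your detour is unnecessary. The paper proves Lemma~\ref{lemma:m1} by induction on the hop count and applies Lemma~\ref{lemma:insert-at-v} \emph{directly at $v$}: when predecessor $u$'s (correct, by induction) message is processed in round $\tau_{su}$, Lemma~\ref{lemma:insert-at-v} guarantees the resulting entry $(\delta(s,v),s)$ sits at a position $k$ with $\delta(s,v)+k>\tau_{su}$, and by Lemma~\ref{lemma:inc-at-v} positions never decrease, so $v$'s send round for $s$ exceeds $\tau_{su}$ for \emph{every} predecessor $u$; no comparison between send times of different sources at $v$ is ever needed, so the tie problem never appears. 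Your counting argument --- mapping the entries below $(d-1,s)$ in $L_u$ to entries below $(d,s)$ in $L_v$ --- is essentially the internal proof of Lemma~\ref{lemma:insert-at-v} re-derived inside the induction, with the lexicographic strengthening added precisely to feed it; if you repair the tie case as described, your proof becomes correct but strictly longer than using Lemma~\ref{lemma:insert-at-v} as a black box. A final small inaccuracy: a non-predecessor message does not always ``match none of the three branches''; if it arrives before the $s$-entry reaches value $d$ it can match the insertion or replacement branch and temporarily set wrong $\sigma_{sv}$, $P_s(v)$. This is harmless --- the first predecessor message resets both via Step~\ref{alg:dAn5} --- but your proof should say so rather than claim such messages are ignored outright.
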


\begin{proof}
Let $D_{sv}$ denote the dag of shortest paths from $s$ to $v$.
	We use induction on the  number of hops $h$ from $s$ to $v$ in $D_{sv}$.
	
	\textbf{Base case:} $h=0$ . The initializations in Steps 3 and 4 correctly set $(d_{ss}=0,s)$, $\sigma_{ss}=1$ and $P_s(s)=\emptyset$.
	For $h=1$, the dag consists of a single edge $(s,v)$ and the three values are set correctly to $d_{sv}=1$, $\sigma_{sv}=1$ and
	$P_s(v) = \{s\}$ in Step 12.

	\textbf{Induction step:} Assume that lemma holds for all $s,u$ such that $D_{su}$ has at most $h-1$ hops and let 
	$D_{sv}$ have $h$ hops.
	  Consider any predecessor $u$ of $v$ in $D_{sv}$.  By the induction hypothesis
	  $u$ will send out the message $(\delta(s,u), u, \sigma_{su})$ at the designated
	  round $r$ in Step 6 and by Lemma~\ref{lemma:insert-at-v}  $v$ will
	  insert the value $(\delta(s,v),s)$ at a position $k$ with $r< \delta(s,v) +k$ (if $(\delta(s,v),v)$
	  is not already in $L_v$; in either case, $\sigma_{sv}$ is updated correctly with the
	  value of $\sigma_{su}$ and $P_s(v)$ is updated with $u$ in Step 12, 14, or 17).
	  The same process occurs with every predecessor of $v$ in $D_{sv}$.
	   Finally, by  Lemma~\ref{lemma:insert-at-v}  
	   all of these updates occur before the round in which
	the message for source $s$ is sent from $v$ in Step 8. This establishes the induction step and the lemma.
\end{proof}

\vone
\noindent
{\bf $(S,h,r)$-detection and $(h,k)$-SSP Problems.} Algorithm \ref{alg:dAPSP} computes APSP,
 the predecessors and number of shortest paths to each vertex since these are the parameters
of interest for betweenness centrality. However, our techniques are applicable to related problems in
the literature such the {\it source detection} or {\it $(S,h,r)$-detection} task~\cite{Lenzen13} and the
{\it $(h,k)$-SSP} problem~\cite{AR18}. In both of these 
problems, a subset $S$ of $k$ nodes is designated as the source set, and a hop length $h$ specifies that only paths with at most $h$ edges are to be considered. 
In the $(S,h,r)$-detection task, $r$ is at most $k$ and 
each node $v$ needs to compute the shortest path distance to $v$ from the up to $r$ nearest sources in $S$, all with hop length at most $h$. 
In the $(h,k)$-SSP problem each node $v$ needs to compute the 
shortest path distance to $v$ from every source in $S$ with hop length at most $h$. 

The following results are readily obtained by simple adaptations of the above lemmas for APSP. Here $D$
is the directed or undirected diameter of the graph $G$, according to whether $G$ is directed or undirected.
To obtain these results, we modify Algorithm \ref{alg:dAPSP} so that the initialization in Step \ref{alg:init0} applies only to source nodes 
(with $L_v$ set to $\emptyset$ for all other nodes), and during a general round, at each node $v$ we
keep in $L_v$ only those entries that are relevant to the problem being considered.

\begin{lemma}\label{lem:sdkh}
The $(S,h,r)$-detection problem can be computed in $r + h$ rounds, and the $(h,k)$-SSP problem can be computed in $k + h$ rounds. In both bounds the second term can be improved to 
$\min \{h, D\}$ where $D$ be the diameter of the graph, if knowledge of global termination is not required.
\end{lemma}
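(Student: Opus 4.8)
The plan is to show that the two detection/SSP problems are computed correctly by the modified Algorithm~\ref{alg:dAPSP}, and then to establish the round bounds $r+h$ and $k+h$ (improvable to $\min\{h,D\}$ in the second term). Correctness comes essentially for free: since we only restrict which entries are kept in $L_v$, Lemmas~\ref{lemma:insert-at-v}--\ref{lemma:m1} continue to apply to the surviving entries, so every retained pair $(d_{sv},s)$ is sent out in non-decreasing distance order at its prescribed round, and no path of hop-length exceeding $h$ is ever generated because we discard any entry whose distance (in the unweighted setting, equal to its hop count) would exceed $h$. I would state this reduction explicitly as the first step, noting that the pruning of irrelevant entries can only shorten lists and hence only decrease positions $\ell_v^{(r)}$, so the monotonicity arguments survive unchanged.

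\textbf{Bounding the number of rounds.} The key observation is that a pair $(d_{sv},s)$ is sent from $v$ in round $d_{sv} + \ell_v^{(r)}(d_{sv},s)$ (Step~\ref{alg:send1s}). For the $(S,h,r)$-detection task each node keeps only the up-to-$r$ nearest sources, so the position of any pair in $L_v$ is at most $r$; combined with the restriction $d_{sv}\le h$, every relevant message is emitted by round $r+h$, giving the first bound. For the $(h,k)$-SSP problem, $L_v$ may hold entries for all $k$ sources, so the position is at most $k$ and with $d_{sv}\le h$ every message leaves by round $k+h$. I would then argue that all relevant values have \emph{arrived} and been sent within these bounds by appealing to Lemma~\ref{lemma:insert-at-v} (an entry inserted at position $\kappa$ in round $\rho$ satisfies $d_{sv}+\kappa>\rho$, so it is in place strictly before its designated send-round) together with Lemma~\ref{lemma:m1} (the value sent is the true shortest-path distance and count).

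\textbf{The improved second term.} For the $\min\{h,D\}$ refinement I would argue that the \emph{distance} contribution to any send-round is at most the directed (or undirected) diameter $D$ of the relevant portion of the graph, since a shortest path realizing $d_{sv}$ has at most $D$ edges when it exists and $d_{sv}\le D$; thus every useful message is emitted by round $r+\min\{h,D\}$ (resp.\ $k+\min\{h,D\}$). The caveat ``if knowledge of global termination is not required'' is essential: without knowing $n$ (or running a terminating subroutine such as APSP-Finalizer), a node cannot certify that no further relevant message will arrive, so I would explicitly note that the algorithm may need to continue past this round purely to detect termination, and that the improved bound counts only the rounds until the last relevant \emph{value} is finalized, not until every node learns the computation is globally done.

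\textbf{The main obstacle} I expect is not the arithmetic of the send-round but the bookkeeping of which entries are retained in $L_v$ during a general round and the claim that this pruning preserves the invariants of Lemmas~\ref{lemma:insert-at-v} and \ref{lemma:inc-at-v}. In particular, for $(S,h,r)$-detection one must verify that discarding all but the $r$ nearest sources never removes an entry that would later become one of the $r$ nearest (this holds because distances only decrease and the relative lexicographic order is preserved, but it needs a careful monotonicity argument), and that positions in the pruned list still satisfy the $d_{sv}+k>r$ insertion inequality so that the send-round bound remains valid.
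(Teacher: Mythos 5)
Your proposal is correct and takes essentially the same route as the paper, which obtains this lemma by modifying Algorithm~\ref{alg:dAPSP} to initialize only source nodes and to keep in $L_v$ only task-relevant entries, then bounding the send round $d_{sv}+\ell_v^{(r)}(d_{sv},s)$ by a position term (at most $r$ or $k$) plus a distance term (at most $h$, improvable to $\min\{h,D\}$ when global termination detection is waived). The paper in fact gives no further detail beyond ``readily obtained by simple adaptations of the above lemmas,'' so your write-up---including the flagged subtlety that pruning (which removes only a lexicographic suffix of $L_v$) must preserve the insertion inequality of Lemma~\ref{lemma:insert-at-v}---is, if anything, more explicit than the paper's own proof.
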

\subsubsection{Improving the Round Complexity} \label{sec:irc}

We now describe Algorithm \ref{alg:check} which guarantees that Algorithm \ref{alg:dAPSP}
will terminate in $\min\{2n, n +O(D)\}$ rounds. 
More precisely, Alg. \ref{alg:check} terminates the computation before $n + 5D$ rounds 
provided $G$ is strongly connected with $D<n/5$.
Otherwise, the computation terminates necessarily within $2n$ rounds because of step \ref{alg:1} of Alg. \ref{alg:dAPSP}. We now focus on the non-trivial case where $G$ is strongly connected and $D$ is bounded.

Let $B$ be a BFS tree rooted at $v_1$ (node with smallest ID) and created in Step \ref{alg:dAbfs}, Alg. \ref{alg:dAPSP}. 
Also, let $C_v$ be the set of children of $v$ in $B$.
Note that, if $n$ is not known, Step \ref{alg:compn} of Alg. \ref{alg:dAPSP} computes it in at most $2 \undirD \leq 2 D$ rounds. Thus, $n$ is always available during the execution of Alg. \ref{alg:check}. 
The special vertex $v_1$ is used only to uniquely 
select 
a source node for the BFS  (as in \cite{Lenzen13}). If we omit Alg. \ref{alg:check} (and terminate in $2n$ rounds), or if the unique BFS source vertex can be efficiently selected in some other way, there is no need to identify 
vertex 1, or to assume that vertices are numbered from 1 to $n$.
Note that $B$ will be completely defined after $D$ rounds, and the activity of Alg. \ref{alg:check} for a node $v$ becomes relevant only after $n$ rounds.
In the first step, the algorithm checks if $v$ has received the diameter $D$ from its parent $p_v$ in $B$. In this case, $v$ broadcasts $D$ to all its children in $C_v$ and it stops. 
Otherwise, the algorithm checks if $v$ has received one finite distance estimate from every node in $G$ (Step \ref{alg:check21}, Alg. \ref{alg:check}).
(The flag $f_v$ 
is initialized  in Step 3 of Algorithm 3 and 
is used to ensure that steps \ref{alg:check22}--\ref{alg:check3} are performed only once.)
These distances will be correct when round $r \geq \max_s(d_{sv} + \ell_v^{(r)}(d_{sv},s))$ 
(see Lemma \ref{lemma:m1}), and Algorithm \ref{alg:check} proceeds by distinguishing two cases: 
if a node $v$ is a leaf in the tree $B$ (Step \ref{alg:check22}, Alg. \ref{alg:check}), 
it computes the maximum shortest distance $d^*_v$ from any other node $s$  and broadcasts $d^*_v$ to its parent $p_v$ in $B$ (Step \ref{alg:check2}, Alg. \ref{alg:check}). 
Then, $v$ will wait 
up to round $2n$ to receive the diameter $D$ from its parent $p_v$ in $B$ (because of the check in step \ref{alg:check4}, Alg. \ref{alg:check}).

In the second case, when $v$ is not a leaf (and not $v_1$), 
if it has collected (for the first time) the distances $d^*_c$ from all its children in $C_v$ (Step \ref{alg:check24}, Alg. \ref{alg:check}),
it will execute the following steps only once (thanks to the flag $f_v$ initialized to $0$ in Alg. \ref{alg:dAPSP}, and updated to $1$ in Step \ref{alg:check5e}, Alg. \ref{alg:check}):
$v$ computes the maximum shortest distance $d^*_v$ from any source $s$ (Step \ref{alg:check23}, Alg. \ref{alg:check})
and the largest distance value $d^*_{C_v}$ received from its children in $C_v$ (Step \ref{alg:check5s}, Alg. \ref{alg:check}).
Then $v$ sends the larger of $d^*_v$ and $d^*_{C_v}$  to its parent $p_v$ (Step \ref{alg:check5e}, Alg. \ref{alg:check}), and it waits for $D$ from $p_v$ as in the first case.
Finally, when $v$ is in fact $v_1$, after receiving the distances from all its children, it broadcasts the diameter $D$ to its children in $C_{v_1}$ (Step \ref{alg:check3}, Alg. \ref{alg:check}). 


\begin{algorithm}[ht]
	\footnotesize
	\caption{APSP-Finalizer$(v, f_v, p_v, C_v, n)$ \hfill $\rhd$ $p_v, C_v$ computed in Step \ref{alg:dAbfs}, Alg. \ref{alg:dAPSP}}
	\label{alg:check}
	\begin{algorithmic}[1]
		\Ensure Compute and broadcast the network directed diameter $D<\infty$
		\State \textbf{if} $v$ receives diameter $D$ from parent $p_v$ in round $r < 2n$, it broadcasts $D$ to all vertices in $C_v$ and stops  \label{alg:check4}
		\If {$ |L_v^{r}| = n$ and $f_v = 0$} \label{alg:check21}
		\If {$r = \max_s(d_{sv} + \ell_v^{(r)}(d_{sv},s))$ and $C_v = \emptyset$} \Comment{$v$ is a leaf in the BFS tree $B$} \label{alg:check22}
		\State $d^*_v \leftarrow \max_s(d_{sv})$; send $d^*_v$ to parent $p_v$; $f_v \leftarrow 1$ \label{alg:check2}
		\EndIf
		\If {$r \geq \max_s(d_{sv} + \ell_v^{(r)}(d_{sv},s))$} \Comment{completed only once}
		\If {$v$ has received   $d^*_x$ from all children $x \in C_v$}\label{alg:check24}
		\State $d^*_v \leftarrow \max_s(d_{sv})$; $d^*_{C_v} \leftarrow \max_{x \in C_v}(d^*_x)$ \label{alg:check5s} \label{alg:check23}
		\State \mbox{\textbf{if} $v \neq v_1$ \textbf{then} send $\max(d^*_v, d^*_{C_v})$ to parent $p_v$; $f_v \leftarrow 1$}\label{alg:check5e}
		\State \textbf{else} broadcast $D = \max(d^*_{v_1}, d^*_{C_{v_1}})$ to $C_{v_1}$; stop \label{alg:check3}
		\EndIf
		\EndIf
		\EndIf
	\end{algorithmic}
\end{algorithm}

It is readily seen
 that Algorithm \ref{alg:check} broadcasts the correct diameter to all nodes in $G$ since
	after round $r= \max_s(d_{sv} + \ell_v^{(r)}(d_{sv},s))$ the  $d_{sv}$ values
at $v$ 
	are the correct shortest path lengths to $v$ (by Lemma~\ref{lemma:m1}).
	Moreover, since $\max_s(d_{sv} + \ell_v^{(r)}(d_{sv},s)) > n$ when $ |L_v^{r}| = n$, Step~\ref{alg:dAbfs} of Alg.~\ref{alg:dAPSP} is completed and each node $v$ knows its parent and its children in $B$.
	Thus, the value sent by $v$ to its parent in Step \ref{alg:check5e} of  Alg.~\ref{alg:check} is the largest shortest path length to any descendant of $v$ in $B$, including $v$ itself. Thus, node $v_1$ computes the correct diameter of $G$
	  in Step \ref{alg:check3}, Alg. \ref{alg:check}.

\begin{lemma} \label{lemma:m2}
	The execution of Algorithm \ref{alg:dAPSP} requires at most $\min\{2n, n+5D\}$ rounds.
\end{lemma}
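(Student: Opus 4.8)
The plan is to prove the two bounds inside the minimum separately and then combine them. The $2n$ bound is immediate, since the main loop of Algorithm~\ref{alg:dAPSP} (Step~\ref{alg:1}) is explicitly indexed by $r$ running from $1$ to $2n$, so the computation always halts by round $2n$. This already settles the case in which $G$ is not strongly connected: then $D=\infty$ and $\min\{2n,n+5D\}=2n$, and indeed some node never collects $n$ list entries, so the outer guard $|L_v^r|=n$ of the finalizer (Alg.~\ref{alg:check}) never fires there, the convergecast up the BFS tree stalls, $v_1$ never computes a diameter, and the loop bound governs. So the real work is proving the $n+5D$ bound when $G$ is strongly connected, which I would do by charging the running time to four successive phases and summing their durations.

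First I would dispose of the preprocessing: if $n$ is unknown it is computed and broadcast in at most $2\undirD\le 2D$ rounds (Step~\ref{alg:compn}), and the BFS tree $B$, run in parallel with the main loop (Step~\ref{alg:dAbfs}), is completely defined within its depth $\le \undirD \le D$ rounds. Second, for the distance-computation phase I would invoke Lemmas~\ref{lemma:insert-at-v}--\ref{lemma:m1}: strong connectivity forces every source eventually into $L_v$, so $|L_v|=n$, and since the list is kept sorted its last entry occupies position at most $n$ and carries a shortest-path distance at most $D$; hence the round $r_v=\max_s\bigl(d_{sv}+\ell_v^{(r)}(d_{sv},s)\bigr)$ at which $v$ sends its final message, which is also the round by which all of $v$'s distances are finalized and $|L_v|=n$ holds, satisfies $r_v\le n+D$.

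The third phase, the convergecast of the eccentricity values $d^*_v$ up $B$, is where I expect the main obstacle, because an internal node forwards its value only after it has both finalized its own distances (round $\ge r_v$) and received $d^*$ from every child, so a priori the waits could compound with depth. I would tame this with an induction on subtree height $h_v$: writing $T_v$ for the round at which $v$ sends its value to its parent (and $T_\ell=r_\ell$ for a leaf), the recurrence $T_v=\max\bigl(r_v,\ \max_{c\in C_v}(T_c+1)\bigr)$ together with $r_v\le n+D$ and $h_v\ge h_c+1$ gives $T_v\le n+D+h_v$, so at the root $T_{v_1}\le n+D+h_{v_1}\le n+2D$ since the BFS height is at most $D$. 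The fourth phase, the diameter broadcast down $B$ (Step~\ref{alg:check4}), reaches the deepest node, at depth at most $D$, within $D$ further rounds, so every node learns $D$ and stops by round $T_{v_1}+D\le n+3D$ measured from the start of the main loop. Adding back the at most $2D$ preprocessing rounds yields $n+5D$, and taking the minimum with the loop bound establishes the claimed $\min\{2n,n+5D\}$.
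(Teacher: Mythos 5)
Your proof is correct and follows essentially the same phase-by-phase accounting as the paper's: the $2n$ loop bound disposes of the $D=\infty$ case, and for strongly connected $G$ the paper likewise charges at most $2\undirD \le 2D$ rounds of preprocessing, notes that $\max_v\max_s\{d_{sv}+\ell_v^{(r)}(d_{sv},s)\}\le n+D$ so all distances are finalized by round $n+D$, and then adds $D$ rounds for the convergecast to $v_1$ and $D$ rounds for the diameter broadcast. The one difference is that your induction on subtree height ($T_v \le n+D+h_v$) rigorously establishes the step the paper merely asserts---that ``the longest shortest path value reaches $v_1$ within $D$ rounds after the last vertex computes its local maximum value''---thereby ruling out the compounding-waits concern; this is a detail worth having, but it is a refinement of the same argument, not a different approach.
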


\begin{proof}
	Step~\ref{alg:dAbfs} of Alg.~\ref{alg:dAPSP} can be completed in $D$ rounds using standard techniques, and it is executed in parallel with the loop in step \ref{alg:1}, Alg.~\ref{alg:dAPSP}.
	If $n$ is not known, Step \ref{alg:compn} of Alg. \ref{alg:dAPSP} computes it in at most $2D_u \leq 2D$ rounds.
	Moreover, when $D = \infty$ each vertex stops after $2n$ rounds because of step \ref{alg:1} of Alg. \ref{alg:dAPSP}.
	
	When $D$ is bounded, each $v \in V$  will have $|L^{r}_v| = n$ at some round $r$.
	In Alg. \ref{alg:check} (called in Step \ref{alg:dAn6}, Alg. \ref{alg:dAPSP}), 
	using the parent pointers of the BFS tree $B$ already computed (Step \ref{alg:dAbfs}, Alg. \ref{alg:dAPSP}),
	the longest shortest path value reaches $v_1$ within $D$ rounds
	after the last vertex computes its local maximum value. At this point $v_1$ computes the diameter
	$D$ and broadcasts it  to all vertices $v$ in at most $D$ steps. Since
	$\max_v \max_s\{d_{sv} + \ell_v^{(r)}(d_{sv},s)\} \leq n+D$,
	the total number of rounds is at most $n+5D$ 
	(including $2D_u \leq 2D$ rounds for computing $n$).
	The lemma is proved.
\end{proof}

\subsection{Accumulation Technique and BC Computation} \label{sec:acc}

In Algorithm~\ref{alg:accum} we present a simple distributed algorithm to 
implement the accumulation phase in the Brandes algorithm  (Alg. \ref{algo:accumulate}).
Recall that in Algorithm \ref{alg:dAPSP},  in the round when node $v$ broadcasts its finalized message  $(d_{sv}, s, \sigma_{sv})$ on its outgoing edges in step \ref{alg:dAn3}, it also notes the absolute time of this round in $\tau_{sv}$.
Also, by Lemma \ref{lemma:m2}, Alg.~\ref{alg:dAPSP} completes in round $R = \min\{n + 3D, 2n\}$.
Alg.~\ref{alg:accum} sets
the global clock to 0 in Step~\ref{alg:accum05} after these $R$ rounds complete in Alg. \ref{alg:dAPSP}.
In Step \ref{alg:accum1} each node $v$ computes its accumulation round $A_{sv}$ as
$R- \tau_{sv}$.  Then,  $v$ computes $\delta_{s\bullet}(v)$ and broadcasts $\frac{1+\delta_{s\bullet}(v)}{\sigma_{sv}}$ to its predecessors in $P_s(v)$ in round $A_{sv}$ (Steps \ref{alg:accum2s}--\ref{alg:accum2}, Alg. \ref{alg:accum}).


\begin{algorithm}[t]
	\footnotesize
	\caption{BC$(G)$}
	\label{alg:accum}
	\begin{algorithmic}[1]
		\State run Algorithm \ref{alg:dAPSP} (Directed-APSP$(G)$) on $G$; let $R$ be the termination round for Alg \ref{alg:dAPSP}
		\State \{Recall that $\tau_{sv}$ is the round when $v$ broadcasts $(d_{sv}, \sigma_{sv})$ to $\gammaout(v)$ in Step \ref{alg:dAn3}, Alg. \ref{alg:dAPSP}\} \label{alg:accum0}
		\State set absolute time to 0 \label{alg:accum05}
		\For {each vertex $v$ in $G$} \label{alg:accum06}
		\State {\bf for} all $s$ {\bf do} $A_{sv} = R - \tau_{sv}$ \label{alg:accum1}
		\For{\textbf{a} round $0 \leq r \leq R$} \label{alg:accum2s} 
		\State {\bf if} $r = A_{sv}$ {\bf then} send $m = \frac{1+\delta_{s\bullet}(v)}{\sigma_{sv}}$ to $v$'s predecessors
		\For {a received $m$ from an outgoing neighbor in $\gammaout(v)$} 
		\State $\delta_{s \bullet}(v) \leftarrow  \delta_{s \bullet}(v) + \sigma_{sv} \cdot m$ \label{alg:accum-upd}
		\EndFor
		\EndFor  \label{alg:accum2}
		\EndFor
	\end{algorithmic}
\end{algorithm}

Although we have described
Alg. \ref{alg:accum}
specifically 
as a follow-up to
Algorithm~\ref{alg:dAPSP}, it is a general
method that works for any 
distributed
BC algorithm where each node can keep track of the round in which  step~\ref{brandes-dijkstras} in Algorithm \ref{algo:brandes} (Brandes' algorithm) is finalized for each source. This is the case not only for
Algorithm \ref{alg:dAPSP} for 
both directed and undirected unweighted graphs, but also for our BC algorithm for weighted dags in the next section, and for the BC algorithm in~\cite{bc2016} for undirected unweighted graphs (though our
Algorithm \ref{alg:dAPSP} uses a smaller number of rounds). 
In contrast the distributed accumulation phase in~\cite{bc2016} is tied to the start times of the
shortest path computations at each node in the first phase of the undirected APSP algorithm used
there, and
hence is specific to that method.

\begin{lemma} \label{lem:accum}
	In Algorithm~\ref{alg:accum} each node $v$ computes the correct value of
$\delta_{s\bullet}(v)$ at round $A_{sv} = R - \tau_{sv}$, and the only message it sends in round
$A_{sv}$ is $m = \frac{1+\delta_{s\bullet}(v)}{\sigma_{sv}}$, which it sends to its predecessors in the SSSP dag for $s$.
\end{lemma}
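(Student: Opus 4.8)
The plan is to prove both assertions—correctness of $\delta_{s\bullet}(v)$ and the single-message property—by first establishing a timing relationship between the forward send-round $\tau_{sv}$ and the reverse accumulation-round $A_{sv}$, and then running an induction over the reverse topological order of the SSSP dag $D_{sv}$, which simply unrolls Brandes' recursion (Eq.~\ref{eq1}).

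First I would pin down the key ordering fact: whenever $v \in P_s(w)$ (so $v$ is a shortest-path predecessor of $w$ for source $s$), we have $\tau_{sw} > \tau_{sv}$, hence $A_{sw} = R - \tau_{sw} < R - \tau_{sv} = A_{sv}$. This follows from the forward algorithm: by Lemma~\ref{lemma:m1}, $v$ broadcasts its finalized message for $s$ exactly in round $\tau_{sv}$; since in each round a node first sends (Step~\ref{alg:dAn3}) and only afterward processes incoming messages (Steps~\ref{alg:inLs}--\ref{alg:inLe}), the contribution of $v$ to $P_s(w)$ and $\sigma_{sw}$ is incorporated at $w$ only at the end of round $\tau_{sv}$, so $w$ cannot finalize and send before round $\tau_{sv}+1$. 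Thus every dag-successor of $v$ has a strictly earlier accumulation round than $v$, which is exactly the reverse-topological schedule Brandes' recursion needs. I would also record that, by the one-send-per-round property of the forward phase (Lemma~\ref{lemma:true-sp} together with the remark that at most one entry satisfies $d_{sv}+\ell_v^r(d_{sv},s)=r$), the values $\tau_{sv}$, and therefore $A_{sv}$, are distinct across sources $s$ at a fixed node $v$. This immediately yields the single-message claim, since at most one source can satisfy $A_{sv}=r$ in any round $r$.

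Next I would carry out the induction on the reverse topological order of $D_{sv}$ (equivalently, on increasing $A_{sv}$). For the base case, a sink $v$ (no $w$ with $v \in P_s(w)$) receives no messages, so $\delta_{s\bullet}(v)$ stays $0$, matching Eq.~\ref{eq1}, and $v$ sends $m = \frac{1}{\sigma_{sv}}$ at round $A_{sv}$. For the inductive step, let $v$ have dag-successors $\{w : v \in P_s(w)\}$. By the ordering fact each such $w$ sends at round $A_{sw} < A_{sv}$; since the communication edges are bidirectional and $w$ directs its message to exactly its predecessors $P_s(w)\ni v$, node $v$ receives $m_w = \frac{1+\delta_{s\bullet}(w)}{\sigma_{sw}}$ (correct by the induction hypothesis) along the edge to $w$ strictly before round $A_{sv}$. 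Step~\ref{alg:accum-upd} then adds $\sigma_{sv}\cdot m_w = \frac{\sigma_{sv}}{\sigma_{sw}}\bigl(1+\delta_{s\bullet}(w)\bigr)$ for each successor, so that by round $A_{sv}$ the accumulated value is exactly $\sum_{w:\,v\in P_s(w)} \frac{\sigma_{sv}}{\sigma_{sw}}\bigl(1+\delta_{s\bullet}(w)\bigr)$, the right-hand side of Eq.~\ref{eq1}. Hence $v$ holds the correct $\delta_{s\bullet}(v)$ and sends the correct $m$ to its predecessors $P_s(v)$ at round $A_{sv}$. I would also note that $v$ never accumulates a spurious contribution from an outgoing neighbor $w$ with $v\notin P_s(w)$, since such a $w$ does not direct its reverse message to $v$.

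The step I expect to be the main obstacle is the strict inequality $\tau_{sw} > \tau_{sv}$, because it is what guarantees that every successor's contribution is delivered before $v$ must send. The argument hinges on the intra-round order of operations (send before receive) in Algorithm~\ref{alg:dAPSP} together with Lemma~\ref{lemma:m1}, and care is needed to rule out the degenerate case $\tau_{sw}=\tau_{sv}$, which would collapse the one-round separation that the reverse schedule relies on. Once this ordering is in hand, everything else is a direct unrolling of Brandes' recursion.
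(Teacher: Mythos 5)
Your proposal is correct and follows essentially the same route as the paper's proof: both hinge on the timing fact that $\tau_{sw} > \tau_{sv}$ for every successor $w$ of $v$ in the SSSP dag for $s$ (hence $A_{sw} < A_{sv}$, so all successor contributions arrive before $v$ must send), and both derive the single-message property from the distinctness of the timestamps $A_{sv}$ across sources at a fixed node. Your explicit reverse-topological induction unrolling Eq.~\eqref{eq1} and your justification of the strict inequality via the send-before-receive order within a round merely spell out details the paper leaves implicit (it cites the forward-phase guarantees of Lemma~\ref{lemma:m1}), so this is an elaboration rather than a different argument.
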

 
\begin{proof}
We 
first
show that at time $A_{sv}$, node $v$ has received all accumulation values from its successors in DAG$(s)$. 
This follows from the fact that, in the forward phase, each successor $w$ of $v$ will send its message for source $s$ to nodes in $\gammaout(w)$ 
in round $\tau_{sw}$, which is guaranteed to be strictly greater than $\tau_{sv}$. Thus, since $A_{sw} < A_{sv}$, node $v$ will receive the accumulation value from every successor in the dag for $s$ before time $A_{sv}$,
and 
hence computes the correct values of $\delta_{s\bullet}(v)$ and 
$\frac{1+\delta_{s\bullet}(v)}{\sigma_{sv}}$.
Further, since the timestamps $A_{sv}$ are different for different sources $s$, only the message for source
$s$ is sent out by $v$ in round $A_{sv}$.
\end{proof}

Although we have described
Alg. \ref{alg:accum}
specifically 
as a follow-up to
Algorithm~\ref{alg:dAPSP}, it is a general
method that works for any 
distributed
BC algorithm where each node can keep track of the round in which  step~\ref{brandes-dijkstras} in Algorithm \ref{algo:brandes} (Brandes' algorithm) is finalized for each source. This is the case not only for
Algorithm \ref{alg:dAPSP} for 
both directed and undirected unweighted graphs, but also for our BC algorithm for weighted dags in the next section, and for the BC algorithm in~\cite{bc2016} for undirected unweighted graphs (though our
Algorithm \ref{alg:dAPSP} uses a smaller number of rounds). 
In contrast the distributed accumulation phase in~\cite{bc2016} is tied to the start times of the
shortest path computations at each node in the first phase of the undirected APSP algorithm used
there, and
hence is specific to that method.

\section{APSP and BC in Weighted DAGs} \label{sec:dag}
We now consider the case when the input graph $G=(V,E)$ is a directed acyclic graph
(dag), where each edge $(x,y)$ has 
weight $\weight(x,y)$, and the number of vertices $n$ is known.  
For simplicity, we will
assume that the dag has a single source $s$. 
If the dag contains multiple sources $(s_1,\ldots, s_k)$, we will assume a virtual source $\hat{s}$  which is connected with a direct edge to the real sources. The procedures we present can be readily adapted to the
multiple sources using such a virtual source. 

We start with some definitions.
  Given a path $\pi$ in
$G$,  the {\it length}  $\ell(\pi)$ will
denote the number of edges on  $\pi$ and the {\it weight}
 $w(\pi)$ will denote the sum of the weights on the edges in
$\pi$. The shortest path weight  from $x$ to $y$ will be denoted by $\delta(x,y)$.
Also, here we assume that the $n$ nodes have unique IDs between 1 and $n$, strengthening our earlier assumption that the unique node IDs are between 1 and $poly(n)$. If this condition is not
initially satisfied, we can have an initial $O(n)$-round phase to compute these IDs as follows. Each node broadcasts its ID to all other nodes. The node
with the minimum ID then locally relabels the IDs from 1 to $n$ and broadcasts these values to the other nodes. The initial broadcast from all nodes can be performed in $O(n)$ rounds by piggy-backing on our
APSP algorithm in the previous section, and the final broadcast can be performed in $O(n)$ rounds
by pipelining the $n$ values along an SSSP tree rooted at the vertex with minimum ID.

\begin{definition}
	A \emph{longest length tree (LLT)} $T_s$ for a dag $G$ is a directed spanning tree rooted at its source $s$ where, for each node $v$, the path in $T_s$ from $s$ to $v$ has the maximum length (number of edges) of any path from $s$ to $v$ in $G$.
	The level $\lvl(v)$ of a node $v$  is the length of the path from $s$ to $v$ in $T_s$.
\end{definition}

\noindent
In this section, we focus only on computing 
APSP, the $\sigma_{sv}$ values and the $P_s(v)$ sets 
in a weighted dag. After this we can reverse the timings $\tau_{xy}$ (obtained in Step \ref{alg:txy}, Alg. \ref{alg:congest-DAPSP}) 
and then use Algorithm \ref{alg:accum} to compute the BC values.

We first describe our
distributed algorithm to construct LLT$(G)$ (Alg.~\ref{alg:llt}). It
 uses a {\it delayed-BFS algorithm} on dag $G$.
  It starts a BFS from the 
  source $s$ (Step \ref{alg:llt1}, Alg. \ref{alg:llt}), and it delays the BFS extension from each node $v$ until each incoming node $u \in \gammain(v)$ has propagated its longest length $\ell(u)$ from $s$ to $v$.
Then node $v$ will finalize the longest length received, $\max_{u \in \gammain(v)}(\ell(u) + 1)$, as its level $\ell(v)$  (Steps \ref{alg:llt3} -- \ref{alg:llt4}, Alg. \ref{alg:llt}) and it will broadcast $\ell(v)$ to all outgoing nodes $x \in \gammaout(v)$ (Step \ref{alg:llt2}, Alg. \ref{alg:llt}). 

\begin{algorithm}[ht]
	\caption{LLT$(G)$ }
	\label{alg:llt}
	\begin{algorithmic}[1]
		\State set $\ell(v)\leftarrow 0$, $\pi(v) \leftarrow NIL$  for all $v \in G$
		\State start a BFS from the (virtual) source $s$, broadcasting $\ell(s) = 0$ to all nodes in $\gammaout(s)$ \label{alg:llt1}
		
		\For {each node $v \in V$} \Comment{actions of each node during the BFS}
		\For {each message $\ell(u)$ received from node $u \in \gammain(v)$} \label{alg:llt3}
		\If{$\ell(u)+1 > \ell(v)$}
		\State  $\ell(v) \leftarrow \ell(u)+1$; $\pi(v) \leftarrow u$
		\EndIf
		\EndFor \label{alg:llt4}
		\If {$v$ has just received a message from the last node in $\gammain(v)$}
		\State $v$ broadcasts $\ell(v)$ to all nodes in $\gammaout(v)$ \label{alg:llt2}
		\EndIf
		\EndFor
	\end{algorithmic}
\end{algorithm} 

\noindent
The proofs of the following lemma and observation are straightforward and are omitted.

\begin{lemma}
	Algorithm LLT computes the parent pointers $\pi(\cdot)$ for an  LLT tree $T_s$ of dag $G$ 
	in $L$ rounds, where $L$ is the length of a longest finite directed path in $G$.
\end{lemma}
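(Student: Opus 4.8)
The plan is to prove two separate claims: \emph{correctness}, that the pointers $\pi(\cdot)$ define a tree $T_s$ in which the $s$-to-$v$ path has maximum length for every $v$ (so $T_s$ is an LLT), and the \emph{round bound}, that every node finalizes within $L$ rounds. Both are naturally handled by induction along a topological order of $G$, which exists because $G$ is acyclic. Throughout I let $\lambda(v)$ denote the true longest-path length (number of edges) from the source $s$ to $v$ in $G$. Since we have arranged a single (virtual) source $s$ with no incoming edges, and since tracing incoming edges backward from any vertex in a finite dag must terminate at the unique in-degree-$0$ node, every vertex is reachable from $s$ and $\lambda(v)$ is well defined and finite.

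First I would establish a progress (no-deadlock) claim: every node eventually receives a message from each of its incoming neighbors and hence reaches the broadcast step (Line~\ref{alg:llt2}). This follows by induction on topological order: $s$ broadcasts at the outset (Line~\ref{alg:llt1}), and once every $u \in \gammain(v)$ has broadcast — which they have, by the induction hypothesis, as each precedes $v$ topologically — node $v$ receives its last incoming message and broadcasts. Acyclicity rules out a cyclic wait, so no node stalls.

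Next, correctness. Each predecessor $u$ broadcasts exactly once, sending its finalized value $\ell(u)$. Thus when $v$ runs the loop in Lines~\ref{alg:llt3}--\ref{alg:llt4} it computes $\max_{u \in \gammain(v)}(\ell(u)+1)$ and records as $\pi(v)$ an incoming neighbor attaining this maximum. By induction $\ell(u)=\lambda(u)$ for each $u\in\gammain(v)$; since every $s$-to-$v$ path ends with an edge $(u,v)$, the longest-path recurrence $\lambda(v)=\max_{u\in\gammain(v)}(\lambda(u)+1)$ yields $\ell(v)=\lambda(v)$ and $\lambda(\pi(v))=\lambda(v)-1$. Following $\pi$ back from $v$ to $s$ therefore traces a path of length exactly $\lambda(v)$, which is a longest $s$-to-$v$ path, so $T_s$ is an LLT.

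Finally, the round bound, which I regard as the crux. Let $r(v)$ be the round in which $v$ broadcasts. In the synchronous model a message sent in round $t$ arrives in round $t+1$, and $v$ broadcasts in the round its \emph{last} incoming message arrives, so $r(v)=\max_{u\in\gammain(v)}(r(u)+1)$ with $r(s)=0$. This is the same max-recurrence as for $\lambda$, whence $r(v)=\lambda(v)$ by induction, and the last node finalizes at round $\max_v\lambda(v)=L$. The point to get right here is exactly that the algorithm's \emph{delay} — waiting for the last rather than the first incoming message — converts the shortest-hop synchronization of ordinary BFS into a longest-hop one, so each node fires precisely at the round equal to its level; this is what makes the final round equal $L$ rather than something smaller, and gives the claimed $L$-round termination.
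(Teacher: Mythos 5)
Your proof is correct: the paper explicitly omits the proof of this lemma as ``straightforward,'' and your induction along a topological order --- progress (no node stalls), level-correctness ($\ell(v)=\lambda(v)$ with $\pi(v)$ attaining the maximum, so the $\pi$-path from $s$ to $v$ has length exactly $\lambda(v)$), and the round bound via the identical max-recurrence $r(v)=\max_{u\in\gammain(v)}(r(u)+1)$ --- is precisely the straightforward argument the authors allude to. The only point worth flagging is a harmless indexing convention: your count of $L$ rounds takes the source's initial broadcast (Step 2 of the algorithm) to occur at time $0$, i.e., as part of initialization rather than as a round of its own, which is the natural reading and matches the lemma's stated bound.
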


\begin{observation} \label{lemma:obsLLT}
If $T_s$ is an LLT of dag $G$ then every edge $(u,v)$ in $G$ has $\lvl(u) < \lvl(v)$.
\end{observation}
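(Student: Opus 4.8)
The plan is to reduce the observation to a single extension argument based on the characterization of levels in an LLT. The first step is to record the precise meaning of $\lvl(v)$: by the definition of a longest length tree, the path in $T_s$ from $s$ to $v$ realizes the maximum number of edges among all $s$-to-$v$ paths in $G$, so $\lvl(v)$ equals the length of a longest directed path from $s$ to $v$ in $G$. Since $T_s$ is a spanning tree rooted at $s$, every vertex (in particular both endpoints $u$ and $v$ of the given edge) is reachable from $s$, so $\lvl(u)$ and $\lvl(v)$ are both well defined and finite.

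Next, I would fix a longest directed path $\pi$ from $s$ to $u$ in $G$; by the previous step $\ell(\pi) = \lvl(u)$. The key move is to append the edge $(u,v)$ to $\pi$, producing a directed walk $\pi'$ from $s$ to $v$ of length $\lvl(u)+1$. To count $\ell(\pi')$ toward the maximum that defines $\lvl(v)$, I need $\pi'$ to be a genuine (simple) path. This is where acyclicity is used: if $v$ already appeared on $\pi$, then $\pi$ would contain a subpath from $v$ to $u$, and together with the edge $(u,v)$ this would form a directed cycle, contradicting that $G$ is a dag. Hence $v \notin \pi$ and $\pi'$ is a simple directed path from $s$ to $v$ with $\ell(\pi') = \lvl(u)+1$.

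Finally, applying the characterization of $\lvl(v)$ as the maximum length of any $s$-to-$v$ path gives $\lvl(v) \geq \ell(\pi') = \lvl(u)+1 > \lvl(u)$, which is exactly the claimed strict inequality. The only point requiring any care is the cycle-freeness check in the middle step; everything else is a direct unwinding of the LLT definition, which is why the authors mark the proof as straightforward and omit it.
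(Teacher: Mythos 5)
Your proposal is correct and follows exactly the paper's argument: take a longest path from $s$ to $u$, append the edge $(u,v)$ to obtain an $s$-to-$v$ path of length $\lvl(u)+1$, and conclude $\lvl(v) \geq \lvl(u)+1 > \lvl(u)$. Your additional check that the extended walk is a simple path (using acyclicity of $G$) is a detail the paper leaves implicit, but it is the same proof.
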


\noindent
The above observation readily follows from the fact that, given the edge $(u,v)$,
$\lvl(v)$ can be made at least one larger than
$\lvl(u)$ by taking a longest path from $s$ to $u$ and following it with edge $(u,v)$.
To compute APSP in the weighted dag $G$, we assume the vertices are numbered 1 to $n$, and
we construct the SSSP dags for all sources by using a predetermined schedule based on nodes IDs and
levels in the $LLT$ of $G$
(Algorithm~\ref{alg:congest-DAPSP}).
For each node $v$, the SSSP at node $v$ starts at absolute time $ID_v + \ell(v)$ (Step \ref{algCDAG:bfs}, Alg. \ref{alg:congest-DAPSP}), where a message containing the source 
$v$, the distance (0) and the number of shortest paths (1), is sent to each outgoing node $w$ of $v$.
In general, 
the message
 for SSSP$(x)$ will leave 
 node $y$ at absolute time $\tau_{xy} = ID_x + \ell(y)$ (Steps \ref{alg:txy} -- \ref{alg3:schedule}, Alg. \ref{alg:congest-DAPSP}), where $\ell(y)$ is the level of $y$ in the $LLT$ of $G$
 (see Fig.~\ref{fig:dag}).
 After $y$ receives all the distances 
 for a source $s$
 from its incoming nodes, it updates its shortest distance $\delta(x,y)$ as the minimum value received (Step \ref{alg3:delta}, Alg. \ref{alg:congest-DAPSP}),
 and computes in the set $P_x(y)$, the predecessors of $y$ in the SSSP dag rooted at $x$ (Step \ref{alg3:P}, Alg. \ref{alg:congest-DAPSP}), and the number
 of shortest paths from $x$ to $y$ in $\sigma_{xy}$ (Step \ref{alg3:sigma}, Alg. \ref{alg:congest-DAPSP}). These values are computed for each source $x$ from
 which $y$ is reachable, and 
 are used in the time-reversed accumulation algorithm to compute BC.

\begin{algorithm}[ht]
	\caption{Weighted dag-APSP($G$)}
	\label{alg:congest-DAPSP}
	\begin{algorithmic}[1] 
		\State  compute  a directed LLT $T_s$ rooted at source $s$ of $G$ and  the level $\ell(v)$ of each vertex $v$ using algorithm LLT (Alg.\ref{alg:llt}) \label{alg:cda1}
		\State set absolute time to 0 
		\State {\bf for} each node $v$ {\bf do}  start SSSP($v$) at absolute time $ID_v + \ell(v)$ by sending $(v, 0, \sigma_{vw} = 1)$ to each node $w \in \gammaout(v)$	\label{algCDAG:bfs}
		\For {each pair of vertices $x,y$ with $\ell(x) < \ell(y)$}
		 \State schedule the following at node $y$ at absolute time $\tau_{xy} = ID_x + \ell(y)$ for SSSP$(x)$ :\label{alg:txy}
		 \State for each $u \in \gammain(y)$ let $(x, \delta(x, u), \sigma_{su} )$ be the mess. received by $y$ for SSSP$(x)$ from~$u$ 
		 \State compute  $\delta(x,y)$ as $\min_{u \in \gammain(y)}{\delta(x, u) + \weight(u,y)}$ (if at least one value $\delta(x, u)$ is received), otherwise set $\delta(x,y) \leftarrow \infty$ \label{alg3:delta}
		 \If {$\delta(x,y) \neq \infty$}
		 \State   $P_x(y) \leftarrow \{ u \in \gammain(y)$ such that $\delta(x, u) + \weight(u,y) = \delta(x, y)\}$ \label{alg3:P}
		 \State  $\sigma_{xy} \leftarrow \sum_{u \in P_x(y)}{\sigma_{xu}}$ \label{alg3:sigma}
		 \State  send message $(x, \delta(x,y), \sigma_{xy}) $  to $z$ for each $z\in \gammaout(y)$ \label{alg3:schedule}
		 \EndIf
 		\EndFor
	\end{algorithmic}
\end{algorithm}

\noindent
{\it Complexity.} We now establish correctness and round 
complexity of Algorithm \ref{alg:congest-DAPSP}.

\begin{lemma}
The value $\delta(x,y)$ computed in Step~\ref{alg3:delta} is the correct shortest path
distance from $x$ to $y$.
\end{lemma}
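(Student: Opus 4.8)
The plan is to prove correctness by induction on the level $\lvl(y)$ of the target node $y$ in the LLT $T_s$, matching the scheduling structure of Algorithm~\ref{alg:congest-DAPSP}. The key enabling fact is Observation~\ref{lemma:obsLLT}: every edge $(u,y)$ in $G$ satisfies $\lvl(u) < \lvl(y)$, so every in-neighbor $u \in \gammain(y)$ has strictly smaller level than $y$. This means the recursion $\delta(x,y) = \min_{u \in \gammain(y)} \{\delta(x,u) + \weight(u,y)\}$ references only values at nodes of strictly smaller level, giving a well-founded induction aligned with the standard shortest-path relaxation on a dag.

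First I would fix a source $x$ and argue a timing claim: the message for SSSP$(x)$ from each in-neighbor $u$ of $y$ arrives at $y$ strictly before absolute time $\tau_{xy} = ID_x + \lvl(y)$. Indeed, $u$ sends its SSSP$(x)$ message at time $\tau_{xu} = ID_x + \lvl(u)$ (Step~\ref{alg3:schedule}), and since $\lvl(u) \le \lvl(y)-1$ by Observation~\ref{lemma:obsLLT}, the message leaves $u$ at time at most $ID_x + \lvl(y) - 1$ and reaches $y$ in the next round, i.e.\ by time $\tau_{xy}$. Hence when $y$ performs its computation for SSSP$(x)$ at time $\tau_{xy}$, it has in hand the values $\delta(x,u)$ from exactly those in-neighbors $u$ reachable from $x$. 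This establishes that the scheduled computation at $y$ is never premature, which is the crux of why the predetermined, non-adaptive schedule suffices.

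With timing in place, the inductive step is the standard dag relaxation argument. For the base case $\lvl(y)=0$ we have $y=s$ (or a virtual source), and $\delta(x,x)=0$ is set correctly in Step~\ref{algCDAG:bfs}; if $y \neq x$ is a source node it is unreachable from any other source and $\delta$ is correctly left at $\infty$. For the inductive step, assume every in-neighbor $u$ has its correct $\delta(x,u)$ value by the induction hypothesis (valid since $\lvl(u) < \lvl(y)$). Any shortest path $x \leadsto y$ has a last edge $(u,y)$ with $u \in \gammain(y)$, and its weight is $\delta(x,u) + \weight(u,y)$; conversely each quantity $\delta(x,u) + \weight(u,y)$ is the weight of some $x \leadsto y$ walk. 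Taking the minimum over all in-neighbors therefore yields exactly $\delta(x,y)$, which is what Step~\ref{alg3:delta} computes. If no in-neighbor received a finite value, then $y$ is unreachable from $x$ and $\delta(x,y)=\infty$ is set correctly.

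The main obstacle I anticipate is not the relaxation arithmetic, which is routine, but the timing claim — making precise that the schedule $\tau_{xy} = ID_x + \lvl(y)$ guarantees all relevant in-neighbor messages have already arrived without any acknowledgment or handshaking. The subtlety is that the bound relies on using the \emph{level} in the LLT (a longest-length quantity) rather than a shortest-distance or BFS depth: it is precisely because $\lvl$ strictly increases along every edge that the fixed offset of one round per level unit is enough to serialize the dependencies correctly. I would take care to note that the additive term $ID_x$ merely shifts the entire schedule for source $x$ without affecting the relative timing among nodes, so it plays no role in correctness and only matters for the congestion/round-count analysis handled separately.
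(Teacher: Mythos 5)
Your proof is correct, and it rests on the same two pillars as the paper's: Observation~\ref{lemma:obsLLT} (levels strictly increase along every edge) and the schedule $\tau_{xy} = ID_x + \ell(y)$. The difference is in how these are combined. The paper argues globally: since any $x$--$y$ path has at most $\ell(y)-\ell(x)$ edges and SSSP$(x)$ starts at time $ID_x+\ell(x)$, the value ``sent on every path from $x$ to $y$'' reaches $y$ by time $ID_x+\ell(y)$, so the minimum taken in Step~\ref{alg3:delta} is correct. You instead do an induction on $\ell(y)$ with a one-hop timing claim (a message leaves $u$ at $ID_x+\ell(u)$ and arrives by $ID_x+\ell(u)+1 \le \tau_{xy}$) followed by the standard dag relaxation. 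Your version is somewhat more rigorous on a point the paper's compressed argument glosses over: intermediate nodes do not relay a received value immediately --- they hold it until their own scheduled round $ID_x+\ell(z)$ and forward only the minimum computed there (Step~\ref{alg3:schedule}) --- so the claim that ``the value sent on every path arrives in time'' really needs exactly the hop-by-hop inductive bookkeeping you supply, with the induction hypothesis certifying that what each in-neighbor forwards is already the true distance $\delta(x,u)$ (or the initialization of Step~\ref{algCDAG:bfs} when $u=x$). The paper's argument buys brevity; yours buys a cleaner handling of the relay semantics and of unreachable in-neighbors. One minor imprecision on your side: the base case $\ell(y)=0$ concerns only the root $s$, for which Step~\ref{alg3:delta} is never executed (there is no $x$ with $\ell(x)<0$), so it is vacuous; the real grounding of your induction is the case where every in-neighbor of $y$ is either $x$ itself or unreachable from $x$, which your inductive step already covers.
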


\begin{proof}
Since every edge $(u,v)$ in $G$ has $\ell(u) < \ell(v)$ (see  Observation~\ref{lemma:obsLLT}),
 any path from $x$ to $y$ in $G$ has length at most
$\ell(y) - \ell(x)$. The SSSP$(x)$ starts at $x$ at absolute time $ID_x + \ell(x)$, and hence the value
sent on every path from $x$ to $y$ in $G$ arrives at $y$ at absolute time $ID_x + \ell(y)$ or less.
Since $\delta(x,y)$ is computed as the minimum of the values received at time
$ID_x + \ell(y)$, this is the correct $x$--$y$ shortest path weight.
\end{proof}

\begin{lemma} \label{lemma:d1}
	Each node transmits a message for at most one SSSP 
	in each round. 	
\end{lemma}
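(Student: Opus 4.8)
The plan is to exploit the rigid, predetermined schedule that Algorithm~\ref{alg:congest-DAPSP} imposes on every transmission. Fix a node $y$. The only messages $y$ ever sends are (i) the launch of its own computation SSSP$(y)$, sent at absolute time $ID_y + \ell(y)$ in Step~\ref{algCDAG:bfs}, and (ii) relayed messages for a source $x$ with $\ell(x) < \ell(y)$, which by Steps~\ref{alg:txy} and \ref{alg3:schedule} leave $y$ at absolute time $\tau_{xy} = ID_x + \ell(y)$. Both cases are captured by the single formula $\tau_{xy} = ID_x + \ell(y)$ (taking $x = y$ in case (i)). So the first step is simply to verify that no send performed by $y$ escapes this schedule.

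The key observation is that, for a fixed node $y$, the level $\ell(y)$ is a constant, so the round in which $y$ transmits the message for SSSP$(x)$ depends on the source $x$ only through its identifier $ID_x$. Hence the map $x \mapsto \tau_{xy}$ is injective: if $y$ were to transmit messages for two distinct sources $x_1 \neq x_2$ in the same round, then $ID_{x_1} + \ell(y) = ID_{x_2} + \ell(y)$, forcing $ID_{x_1} = ID_{x_2}$. Since the node identifiers are unique (indeed, by the relabelling described at the start of this section, the $n$ IDs are distinct values in $\{1,\dots,n\}$), this contradicts $x_1 \neq x_2$. Therefore in each round $y$ transmits a message for at most one SSSP.

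The argument is essentially a one-line injectivity fact, so I do not expect a genuine obstacle here; the only point that requires care is confirming that \emph{every} send by $y$---the launch of its own source in Step~\ref{algCDAG:bfs} together with every relay in Step~\ref{alg3:schedule}---is governed by the same schedule $\tau_{xy} = ID_x + \ell(y)$, so that no message is left out of the injectivity accounting. The reachability and level guards ($\delta(x,y) \neq \infty$ and $\ell(x) < \ell(y)$) only restrict which sources $y$ ever forwards for, and do not affect the uniqueness of the send round.
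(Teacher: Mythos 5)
Your proof is correct and follows essentially the same argument as the paper: for a fixed node, the send round for source $x$ is $ID_x + \ell(\cdot)$, and uniqueness of IDs makes this map injective, so distinct sources are served in distinct rounds. Your explicit inclusion of the node's own SSSP launch (Step~\ref{algCDAG:bfs}, the case $x=y$) in the same schedule is a minor tightening of bookkeeping, not a different approach.
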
 

\begin{proof}
Consider a node $x$ and let $u$ and $v$ be any two nodes from which $x$ is reachable.
Node $x$ will transmit the message for SSSP$(u)$ in round $ID_u + \ell(x)$, and the message for SSSP$(v)$ in round $ID_v + \ell(x)$. Hence, these messages will be transmitted on different rounds.
Hence the message for at most one SSSP dag will be sent out by $x$ in each round.	
\end{proof}

\noindent
Finally, since $ID_x + \ell(y) \leq n + L$ for all $x,y \in V$, the round complexity of computing APSP in a weighted dag is $n + O(L)$.

\begin{SCfigure}[2]
	\vspace{-0.3in}
	\caption{An example of distributed SSSP execution from $u$ and $v$ in Alg. \ref{alg:congest-DAPSP}. Snake lines represent the LLT $T_s$ path. Dotted lines are shortest paths.
		SSSP$(u)$ and SSSP$(v)$ will leave $w$ at different times $ID_u+8$ and $ID_v+8$. Moreover the two SSSP$(u)$ paths $u \rightsquigarrow w$ and $u \rightsquigarrow v \rightsquigarrow w$ could reach $w$ at different time steps, but they will be processed (only the shortest path will be propagated) at the same absolute time $ID_u+8$.}
	\label{fig:dag} 
	\begin{tikzpicture}[every node/.style={circle, draw, inner sep=0pt, minimum width=5pt}]
	\node (s)[label=above:$s$] at (-1,0)  {};
	\node[draw=none] at (-3,0) {$\ell(s)=0$};
	\node (u)[label=left:$u$] at (-1,-0.8) {};
	\node[draw=none] at (-3,-0.8) {$\ell(u)=4$};
	\node (v)[label=right:$v$] at (1,-1.4) {};
	\node[draw=none] at (-3,-1.4) {$\ell(v)=5$};
	\node (w)[label=below:$w$] at (0,-3) {}; 
	\node[draw=none] at (-3,-3) {$\ell(w)=8$};
	\path[->,decoration={snake}] { (s) edge[decorate] (u)};
	\path[->,decoration={snake}] { (u) edge[decorate] (v)};
	\path[->,decoration={snake}] { (v) edge[decorate] (w)};
	\draw[dotted,->] (u) -- (w);
	\draw[dotted,->] (v) to [bend left=45] (w.east);
	\draw[dotted,->] (u) to [bend left=45] (v.north);
	\end{tikzpicture}
\end{SCfigure}
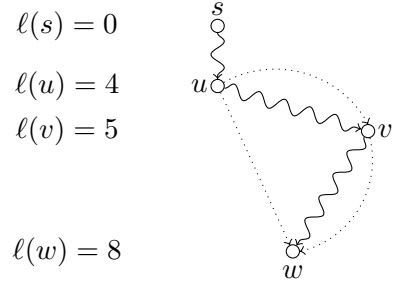

\vone
\noindent
 {\it Correctness.} The correctness follows from the execution of a BFS procedure from each node.

\subsection{BC in a Weighted DAG}
We can now use Algorithm \ref{alg:accum} to compute betweenness centrality in a  weighted dag $G$ using 
the round numbers $\tau_{xy}$ computed in Algorithm \ref{alg:congest-DAPSP}
to schedule the accumulation step at node $y$ for source $x$.
As seen from step \ref{alg:txy} of Algorithm \ref{alg:congest-DAPSP},
$\tau_{xy}$ is the round when node $y$ broadcasts $\delta(x,y)$ and $\sigma_{xy}$ to all its outgoing nodes (Steps \ref{alg:txy}--\ref{alg3:schedule}, Alg. \ref{alg:congest-DAPSP}). 
Thus, similarly to Alg. \ref{alg:accum}, in the accumulation round $A_{xy} = 2n-\tau_{xy}$ (Step. \ref{alg:accum1}, Alg. \ref{alg:accum}), node $y$ will receive all the accumulation values from every successor in the dag for $x$, and it will compute the correct value of $\delta_{x\bullet}(y)$.
The overall dag BC algorithm is in Algorithm~\ref{alg:congest-bcAPSP}. It uses double the number of rounds as the dag APSP algorithm, and hence runs in $2n +O(L)$ rounds.

\begin{algorithm}[ht]
	\caption{Weighted dag-BC($G$)}
	\label{alg:congest-bcAPSP}
	\begin{algorithmic}[1]
		\State run Algorithm \ref{alg:congest-DAPSP} on $G$ 
		\State {\bf for} all $s$ and $v$, $\tau_{sv}$ computed in Step~\ref{alg:txy}, Alg. \ref{alg:congest-DAPSP} will be used in Alg. \ref{alg:accum}
		\State set absolute time to 0 
		\State run steps \ref{alg:accum06}--\ref{alg:accum2} of Algorithm \ref{alg:accum} on $G$
	\end{algorithmic}
\end{algorithm}

Thus, BC for a weighted dag can be computer in no more than $2n + O(L)$ rounds in the \congest model.  

\section{Conclusion}
We have presented several distributed algorithms in the \congest model for computing BC and path problems in directed graphs. The sub-area of distributed algorithms for directed graphs is still in early development, and our work has presented several new results and techniques. 
A useful observation highlighted by our research is that global delay pipelining techniques can in fact cooperate to improve the efficiency of distributed algorithms for directed graphs.
Moreover, they can be used to reduce the number of messages used by the algorithm.
A distributed implementation of our BC algorithm for unweighted directed graphs on
the distributed platform D-Galois~\cite{gluon}  has been found to outperform earlier high-performance
distributed BC implementations~\cite{HP19}.

A major open question left by our work is to obtain improved deterministic algorithms for APSP and
BC in weighted graphs, both directed and undirected. The current best deterministic bounds for
general weighted graphs are in~\cite{AR18} and \cite{AR18b}.

\bibliography{references,refs2}
\section{Appendix}

\subsection{BC in Undirected Unweighted Graphs \cite{bc2016}} \label{sec:undbc}

Recently, a distributed BC algorithm for unweighted undirected graphs which terminates in $O(n)$ rounds in the \congest model was presented in
\cite{bc2016},
together with a 
lower bound  of $\Omega (n/\log n)$ rounds for computing BC.
This algorithm computes the predecessor lists and the number of shortest paths (Step~\ref{brandes-dijkstras}
in Alg.~\ref{algo:brandes}) by a natural extension of the unweighted undirected APSP algorithm in \cite{Holzer} (see also \cite{Peleg2012}).
The undirected APSP algorithm in~\cite{Holzer}
starts concurrent BFS computations from different sources scheduled by a pebble that performs a
DFS traversal of
a spanning tree for $G$. Each time the pebble reaches a new node $v$, it pauses for one round before activating $BFS(v)$ and then proceeds to the next unexplored node.
At each node $v$, all messages for a given  BFS (say started at source $s$) reach $v$ at the
same round, and the updated distance is sent out from $v$ in the next round.
Hence, before $v$ broadcasts its distance from $s$ to adjacent nodes, it can readily compute and store $P_s(v)$ and $\sigma_{sv}$ using the incoming messages related to $BFS(s)$ in this round. 
It is well known that this approach does not work in directed graphs, since the APSP algorithm in \cite{Holzer} could create congestion (see Figure \ref{fig:counter}).

\begin{figure}[ht]
	\centering
	\begin{tikzpicture}[every node/.style={circle, draw, inner sep=0pt, minimum width=5pt}]
	\node (s)[label=above:$s (t_0)$] at (0,0)  {};
	\node (u1)[label=above:$u_1(t_1)$] at (1,0) {};
	\node (u2)[label=above:$u_2(t_2)$] at (2,0) {};
	\node (u3)[label=above:$u_3(t_3)$] at (3,0) {};
	\node (u4)[label=above:$u_4(t_4)$] at (4,0) {};
	\node (u5)[label=above:$u_5(t_5)$] at (5,0) {};
	\node (v)[label=right:$v(t_6)$] at (1,-1) {};
	\node (w)[label=below:$w(t_7)$] at (1,-2) {}; 
	\draw[->] (s) -- (u1);
	\draw[->] (u1) -- (u2);
	\draw[->] (u2) -- (u3);
	\draw[->] (u3) -- (u4);
	\draw[->] (u4) -- (u5);
	\draw[->] (s) -- (v);
	\draw[->] (v) -- (u1);
	\draw[->] (s) -- (w);
	\draw[->] (w) -- (u4);
	\end{tikzpicture}
	\caption{Counterexample for the APSP algorithm in \cite{Holzer} for directed graphs. Here $BFS(v)$ and $BFS(w)$ will congest at node $u_4$. 
		Value $t_j$ represents the round when the pebble $P$ starts the BFS from the corresponding node, with $t_i < t_j$ iff $i <j$. In this example $BFS(v)$ will start at round $t_6 = 21$, while $BFS(w)$ will start at round $t_7=24$. They will both reach $u_4$ at the beginning of round $25$ creating a congestion for the next round.}
	\label{fig:counter} 
\end{figure}
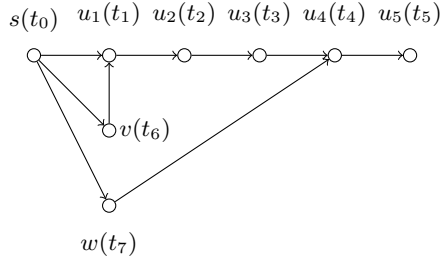

\noindent
Since the pebble pauses at each node and a DFS traversal backtracks over $\Theta(n)$ nodes before activating the last BFS, this distributed algorithm for step \ref{brandes-dijkstras} in Algorithm \ref{algo:brandes} completes in $3n + O(D)$ rounds. 

The distributed algorithm in~\cite{bc2016} for Algorithm \ref{algo:accumulate} is described in the next section. It uses the triangle
inequality for its proof of correctness, which does not apply to the directed case
(since the pebble backtracks along DFS edges in this algorithm).
The algorithm in~\cite{bc2016} also handles the issue that a graph could have an exponential number
of shortest paths which would cause the $\sigma_{st}$ values to have a linear number of bits. Since
the \congest model allows only messages of size $O(\log n)$, they use a floating point
representation with $O(\log n)$ bits
to approximate the $\sigma_{st}$ values.  We review this method in Section~\ref{sec:exp}.
We will use this same method in our
algorithms since it works without change for directed graphs and for weighted graphs.

\subsubsection{Accumulation Phase for Undirected BC}
The distributed method for Algorithm \ref{algo:accumulate} in \cite{bc2016} first computes and broadcasts 
the diameter $\undirD$ of the network during the APSP algorithm.
Then, each node $v$ sets its accumulation broadcast time for each source $s$ to $T_s(v) = T_s+\undirD - d(s,v)$, where $T_s$ is the absolute time when $BFS(s)$ started in the APSP algorithm. 
The global clock is reset to $0$ and each node $v$ sends its accumulation value for $s$ at time $T_s(v)$. Since $\undirD - d(s,v) \geq 0$, this approach completes in at most $3n$ rounds.
Thus, overall BC algorithm in~\cite{bc2016} runs in $6n + O(\undirD)$ rounds.

In Section \ref{sec:bcdir} we present
another simple method which works for our algorithm and can also replace the above algorithm 
in~\cite{bc2016}.

\subsubsection{Handling Exponential Values}\label{sec:exp}
Given the $O(\log n)$-bit restriction in the \congest model, 
\cite{bc2016} maintains approximate values of the $\sigma_{st}$ values using a floating point
representation, and guarantee a relative error for the computed BC which is only $O(n^{-c})$ (where $c$ is a constant). 
Since the technique in \cite{bc2016} works for both undirected and directed graphs (weighted or
unweighted),  we will use the same method in our algorithms  in order to handle 
exponential counts of paths.
\end{document}